\documentclass[runningheads]{llncs}

\usepackage[utf8]{inputenc}
\usepackage[dvipsnames,table]{xcolor}
\usepackage{amsmath}
\usepackage{amssymb}
\usepackage{tikz-cd}
\usepackage{cite}
\usepackage{xspace}
\usepackage{mathtools}
\usepackage{multirow}
\usepackage{threeparttable}
\usepackage[ruled, vlined, linesnumbered, boxed]{algorithm2e}

\usepackage{xargs}                      % Use more than one optional parameter in a new commands
\usepackage[colorinlistoftodos,prependcaption]{todonotes}
\usepackage{soul}
\sethlcolor{yellow}
\usepackage{booktabs}
\usepackage{multicol}

\usepackage[toc,page]{appendix} 
%% Using the pdftex option breaks xelatex.
%% It may be that overleaf doesn't actually nee it---if it does,
%% then we can put it back...
% \usepackage[pdftex,pagebackref=false]{hyperref}
\usepackage[pagebackref=false, hidelinks]{hyperref}
\usepackage[capitalise]{cleveref}
\usepackage{color, colortbl}
\definecolor{Gray}{gray}{0.9}

\newcommand{\EC}{\ensuremath{\mathcal{E}}}

\newcommand{\FF}{\ensuremath{\mathbb{F}}}

\newcommand{\PP}{\ensuremath{\mathbb{P}}}
\newcommand{\ZZ}{\ensuremath{\mathbb{Z}}}

\newcommand{\QQbar}{\ensuremath{\overline{\mathbb{F}}}}
\newcommand{\subgrp}[1]{\ensuremath{\langle{#1}\rangle}}
\newcommand{\End}{\ensuremath{\operatorname{End}}}

\newcommand{\xADD}{\ensuremath{\mathtt{xADD}}\xspace}
\newcommand{\xDBL}{\ensuremath{\mathtt{xDBL}}\xspace}

\SetKw{True}{True}
\SetKw{False}{False}
\SetKwData{Var}{R}
\SetKwData{Tmp}{t}
\SetKwData{XZ}{x}
\SetKwData{UV}{u}
\SetKwFunction{CrissCross}{CrissCross}
\SetKwFunction{KernelPoints}{KernelPoints}
\SetKwFunction{KernelRange}{KernelRange}
\SetKwFunction{NewMDAC}{Enumerate23}
\SetKwFunction{SZeroPoints}{SZeroPoints}
\SetKwFunction{SZeroRange}{SZeroRange}
\SetKwFunction{NormFunc}{Norm}
\SetKwProg{Fn}{Function}{}{}

\newcommand{\Mults}{\textbf{M}\xspace}
\newcommand{\Squares}{\textbf{S}\xspace}
\newcommand{\Adds}{\textbf{a}\xspace}
\newcommand{\ConstMults}{\textbf{C}\xspace}
\newcommand{\Frobs}{\textbf{F}\xspace}

\newcommand{\XSet}[1]{\ensuremath{\mathcal{X}_{#1}}\xspace}

\newcommand{\softO}{\ensuremath{\widetilde{O}}}
\newcommand{\Norm}{\ensuremath{\operatorname{Norm}}}

\newif\ifsubmission
\newif\ifshort
\submissionfalse
\shortfalse

\title{\texorpdfstring{Fast and Frobenius:\\ Rational Isogeny Evaluation over Finite Fields}{Fast and Frobenius: Rational Isogeny Evaluation over Finite Fields}}
\ifsubmission
\author{}
\else
\author{Gustavo Banegas\inst{1}, Valerie Gilchrist\inst{2}, Ana\"{e}lle Le Dévéhat\inst{3}, Benjamin Smith \inst{3}}
\institute{
Qualcomm France SARL, Valbonne, France\and 
Universit\'e Libre de Bruxelles and FRIA, Brussels, Belgium \and
Inria and Laboratoire d’Informatique de l’\'Ecole polytechnique, Institut Polytechnique de Paris, Palaiseau, France
}
\fi

\makeatletter
\newcommand{\mtmathitem}{%
\xpatchcmd{\item}{\@inmatherr\item}{\relax\ifmmode$\fi}{}{\errmessage{Patching of \noexpand\item failed}}
\xapptocmd{\@item}{$}{}{\errmessage{appending to \noexpand\@item failed}}}
\makeatother

                    %$
\begin{document}

\maketitle
\ifsubmission
\else
\vspace{-1ex}
\begingroup%
  \makeatletter%
  \def\@thefnmark{$*$}\relax%
  \@footnotetext{\relax%
    Authors listed in alphabetical order: see \url{https://www.ams.org/profession/leaders/CultureStatement04.pdf}.
    This work was funded in part by a FRIA grant by the National Fund
    for Scientific Research (F.N.R.S.) of Belgium, 
    by the French Agence Nationale de la Recherche through ANR CIAO (ANR-19-CE48-0008),
    and by a \emph{Plan France 2030} grant managed by the Agence Nationale de la
    Recherche (ANR-22-PETQ-0008).
\def\ymdtoday{\leavevmode\hbox{\the\year-\twodigits\month-\twodigits\day}}\def\twodigits#1{\ifnum#1<10 0\fi\the#1}%
    Date of this document: \ymdtoday.%
  }%
\endgroup
\vspace{-1ex}
\fi

\begin{abstract}
    Consider the problem of efficiently evaluating 
    isogenies $\phi: \EC \to \EC/H$
    of elliptic curves over a finite field $\FF_q$,
    where the kernel \(H = \subgrp{G}\)
    is a cyclic group of odd (prime) order:
    given \(\EC\), \(G\), and a point (or several points) $P$ on $\EC$,
    we want to compute $\phi(P)$. 
    This problem is at the heart of efficient implementations of
    group-action- and isogeny-based post-quantum cryptosystems such as CSIDH.  % and Couveignes--Rostovtsev--Stolbunov key exchange.
    Algorithms based on Vélu's formul\ae{} give an efficient solution to this problem
    when the kernel generator $G$ is defined over $\FF_q$.
    However, for general isogenies,
    \(G\) is only defined over some extension $\FF_{q^k}$,
    even though $\subgrp{G}$ as a whole (and thus \(\phi\))
    is defined over the base field $\FF_q$;
    and the performance of Vélu-style algorithms degrades rapidly as $k$ grows.
    In this article we revisit the isogeny-evaluation problem
    with a special focus on the case where $1 \le k \le 12$.
    We improve Vélu-style isogeny evaluation
    for many cases where \(k = 1\)
    using special addition chains,
    and combine this with the action of Galois 
    to give greater improvements when \(k > 1\).
\end{abstract}

\section{Introduction}
\label{sec:intro}

Faced with the rising threat of quantum computing,
demand for quantum-secure, or post-quantum, cryptographic protocols is increasing.
Isogenies have emerged as a useful candidate for post-quantum cryptography 
thanks to their generally small key sizes,
and the possibility of implementing post-quantum group actions
which offer many simple post-quantum analogues of classical
discrete-log-based algorithms (see e.g.~\cite{Smith18}).

A major drawback of isogeny-based cryptosystems
is their relatively slow performance compared with many other post-quantum systems.
In this paper, we improve evaluation times
for isogenies of many prime degrees \(\ell > 3\)
given a generator of the kernel;
these computations are the fundamental building blocks of most
isogeny-based cryptosystems.
Specifically, we propose simple alternative differential addition chains
to enumerate points of (subsets of) the kernel more efficiently.
This speeds up many \(\ell\)-isogeny computations over the base field
by a factor depending on \(\ell\),
and also permits a full additional factor-of-\(k\) speedup
for \(\ell\)-isogenies over \(\FF_{q}\) 
whose kernel generators are defined over an extension \(\FF_{q^k}\).

Our techniques have constructive and destructive applications.
First, accelerating basic isogeny computations
can speed up isogeny-based cryptosystems.
The methods in~\S\ref{sec:enumerate}
apply for many \(\ell > 3\),
so they would naturally improve the performance 
of commutative isogeny-based schemes such as CSIDH~\cite{csidh},
and CSI-FiSh~\cite{CSIFiSh} and its derivatives
(such as~\cite{SASHIMI} and~\cite{SCALLOP}),
which require computing many \(\ell\)-isogenies for various primes \(\ell\).
They may also improve the performance of other schemes
like SQISign~\cite{SQISign},
which computes many \(\ell\)-isogenies in its signing process.
(We discuss applications further in~\S\ref{sec:apps}.)

In~\S\ref{sec:frobenius} we focus on rational isogenies with irrational
kernels; our methods there could be used to improve the performance 
of Couveignes--Rostovtsev--Stolbunov key exchange (CRS)
and related protocols of Stolbunov~\cite{C06,RS06,Stol09,Stol10},
further accelerating the improvements of~\cite{Feo-Kief-Smi}.
This is a small step forward on the road to making CRS a practical
``ordinary'' fallback for CSIDH in the event of new attacks
making specific use of the full supersingular isogeny
graph (continuing the approach of~\cite{CastryckPV20}, for example).

Our results also have applications in cryptanalysis:
the best classical and quantum attacks on commutative isogeny-based
schemes involve computing massive numbers of group actions,
each comprised of a large number of \(\ell\)-isogenies
(see e.g.~\cite{BLMP19} and~\cite{low-mem-csidh}).
Any algorithm that reduces the number of basic operations per
\(\ell\)-isogeny will improve the effectiveness of these attacks.

\paragraph{Disclaimer.}
In this paper, we quantify potential speedups by counting finite field operations.
We make no predictions of real-world speed increases,
since these depend on too many
additional variables including parameter sizes;
the application context; implementation choices;
the runtime platform (including the specificities of the architecture,
vectorization, and hardware acceleration);
and the availability of optimized low-level arithmetic.

\section{%%%%%%%%%%%%%%%%%%%%%%%%%%%%%%%%%%%%%%%%%%%%%%%%%%%%%%%%%%%%%%%%%%%%%%%
    Background
}%%%%%%%%%%%%%%%%%%%%%%%%%%%%%%%%%%%%%%%%%%%%%%%%%%%%%%%%%%%%%%%%%%%%%%%%%%%%%%%
\label{sec:background}

%% We begin with a very short overview of the notions we will use in this paper. 

We work over (extensions of) the base field \(\FF_q\),
where \(q\) is a power of a prime \(p > 3\).
The symbol \(\ell\) always denotes a prime \(\not= p\).
In our applications, \(3 < \ell \ll p\).

\paragraph{Elliptic curves.}
For simplicity,
in this work every elliptic curve
will be supposed to be in a general Weierstrass form
\(\EC: y^2 = f(x)\).
Our algorithms and applications
are focused on\footnote{%
    We will focus exclusively on Montgomery models,
    since these are the most common in isogeny-based cryptography,
    but our results extend easily to other models
    such as traditional short Weierstrass models
    (for number-theoretic applications).
}
\emph{Montgomery models}
\[
    \EC: By^2 = x(x^2 + Ax + 1)
    \qquad
    \text{where}
    \qquad
    B(A^2-4) \not= 0
    \,.
\]
The multiplication-by-$m$ map is denoted by \([m]\).
The \(q\)-power 
Frobenius endomorphism is $\pi: (x, y) \mapsto (x^q, y^q)$.

\paragraph{Field operations.}
While the curve \(\EC\) will always be defined over \(\FF_q\),
we will often work with points defined over \(\FF_{q^k}\)
for \(k \ge 1\).
We write \Mults, \Squares, and \Adds
for the cost of multiplication, squaring, and adding (respectively)
in \(\FF_{q^k}\).
We write \ConstMults for the cost of multiplying
an element of \(\FF_{q^k}\) by an element of \(\FF_{q}\)
(typically a curve constant, or an evaluation-point coordinate).
Note that 
\(\ConstMults \approx (1/k)\Mults\)
(when \(k\) is not too large).
Later,
we will write \Frobs for the cost of evaluating the Frobenius map on
\(\FF_{q^k}\);
see~\S\ref{sec:frobenius-cost} for discussion on this.

\paragraph{\(x\)-only arithmetic.}
Montgomery models are designed to optimize \(x\)-only arithmetic
(see~\cite{Montgomery87} and~\cite{Costello--Smith}).
The \xADD operation is
\[
    \xADD: (x(P), x(Q), x(P-Q)) \longmapsto x(P+Q)
    \,;
\]
it can be computed at a cost of \(4\Mults + 2\Squares + 6\Adds\) using the formul\ae{}
\begin{equation}\label{eq:xADDformulas}
    \begin{cases}
        X_{+}
        =
        Z_{-}\left[(X_P-Z_P)(X_Q+Z_Q) + (X_P+Z_P)(X_Q-Z_Q)\right]^2
        \ ,
        \\
        Z_{+}
        =
        X_{-}\left[(X_P-Z_P)(X_Q+Z_Q) - (X_P+Z_P)(X_Q-Z_Q)\right]^2
    \end{cases}
\end{equation}
(where \((X_P:Z_P)\), \((X_Q:Z_Q)\), \((X_+:Z_+)\), and \((X_-:Z_-)\)
are the \(x\)-coordinates \(x(P)\), \(x(Q)\), \(x(P+Q)\), and \(x(P-Q)\),
respectively.

The \xDBL operation is
\[
    \xDBL: x(P) \longmapsto x([2]P)
    \,;
\]
it can be computed at a cost of \(2\Mults + 2\Squares + \ConstMults + 4\Adds\) using the formul\ae{}
\begin{equation}\label{eq:xDBLformulas}
    \begin{cases}
        X_{[2]P}
        =  
        (X_P + Z_P)^2(X_P - Z_P)^2
        \ ,
        \\
        Z_{[2]P}
        = 
        (4X_PZ_P)((X_P-Z_P)^2 + ((A+2)/4)(4X_PZ_P))
        \ .
    \end{cases}
\end{equation}

\paragraph{Isogenies.} Let $\EC_1, \EC_2$ be elliptic curves over a finite field $\FF_q$. 
An isogeny \(\phi: \EC_1 \to \EC_2\)
is a non-constant morphism mapping the identity point of $\EC_1$ to the
identity point of $\EC_2$.
Such a morphism is automatically a homomorphism.
For more details see~\cite[Chapter 3, \S4]{Silverman}.
The kernel of \(\phi\) is a finite subgroup of \(\EC_1\),
and vice versa: every finite subgroup \(\mathcal{G}\) of \(\EC_1\)
determines a separable \emph{quotient isogeny}
\(\EC_1 \to \EC_1/\mathcal{G}\).

Let $G$ be the generator of the kernel group. The kernel polynomial can be expressed as:
\[
    D(X) := \prod_{P \in S}(X - x(P))
\]
where $S \subset \subgrp{G}$ is any subset that satisfies the conditions:
\begin{equation}
    \label{eq:S-condition}
    S \cap -S = \emptyset
    \qquad
    \text{and}
    \qquad
    S \cup -S = \subgrp{G}\setminus\{0\}
    \,.
\end{equation}

Every separable isogeny \(\phi: \EC_1 \to \EC_2\) defined over \(\FF_q\)
can be represented by a rational map in the form
\begin{equation}
    \label{eq:isogeny-rational-functions}
    \phi:
    (x,y)
    \longmapsto 
    \big(\phi_x(x), \phi_y(x,y) \big)
\end{equation}
with
\[
    \phi_x(x) = \frac{N(x)}{D(x)^2}
    \qquad
    \text{and}
    \qquad
    \phi_y(x,y) = c\cdot y\frac{d\phi_x}{dx}(x)
\]
where $D$ is the kernel polynomial of \(\phi\),
$N$ is a polynomial derived from \(D\),
and \(c\) is a normalizing constant in \(\FF_q\).

\paragraph{Vélu's formul\ae{}.}
Given a curve $\EC$ and a finite subgroup $\mathcal{G} \subset \EC$, Vélu~\cite{velu} gives explicit formul\ae{} for the rational functions that define a separable isogeny $\phi: \EC \to \EC' := \EC/\mathcal{G}$ with kernel $\mathcal{G}$, as well as the resulting codomain curve $\EC'$. Although the quotient curve $\EC'$ and the isogeny $\phi$ are defined up to isomorphism, Vélu's formul\ae{} construct a unique \emph{normalized} isogeny, ensuring that if $\omega$ and $\omega'$ are the invariant differentials on $\EC$ and $\EC'$, respectively, then $\phi^*(\omega') = \omega$.

See Kohel's Thesis~\cite[\S2.4]{kohelthesis}
for more details about explicit isogenies
and a treatment of Vélu's results better-adapted to finite fields.
For more information concerning isogenies and their use in cryptography we refer the reader to \cite{D17}.

%% Table~\ref{tab:xop-costs}
%% summarizes the cost of \xADD and \xDBL operations
%% on common elliptic curve models.
%% the costs quoted in Table~\ref{tab:xop-costs}
%% may help to estimate the speedups afforded by our algorithms there.
%% 
%% \begin{table}[htp]
%% \caption{Costs of \xADD and \xDBL
%%         from \url{http://hyperelliptic.org/EFD/}.
%%         Here \(M\) and \(S\) represent multiplication and squaring,
%%         respectively, in \(\FF_{q^{k'}}\),
%%         while \(c\) represents multiplication by a curve constant in \(\FF_q\)
%%         (\(a\) in the short Weierstrass case,
%%         and \((A+2)/4\) in the Montgomery case).
%%         We do not assume that input points have their \(Z\)-coordinates normalized to 1.}
%%     \label{tab:xop-costs}
%%     \centering
%%     \begin{tabular}{c|r|r}
%%         Model & \xADD & \xDBL
%%         \\
%%         \hline
%%         Montgomery
%%         & $4$\Mults + $2$\Adds
%%         & $2$\Mults + $2$\Adds + $1$\ConstMults
%%         \\
%%         Short Weierstrass (projective)
%%         & $11$\Mults + $5$\Adds 
%%         & $1$\Mults + $8$\Adds + $1$\ConstMults
%%         \\
%%         \hline
%%     \end{tabular}
%% \end{table}
%% 
%% 
%% 

\section{%%%%%%%%%%%%%%%%%%%%%%%%%%%%%%%%%%%%%%%%%%%%%%%%%%%%%%%%%%%%%%%%%%%%%%%
    Evaluating isogenies
}%%%%%%%%%%%%%%%%%%%%%%%%%%%%%%%%%%%%%%%%%%%%%%%%%%%%%%%%%%%%%%%%%%%%%%%%%%%%%%%
\label{sec:isogeny_eval_problem}

Let \(\EC\) be an elliptic curve over \(\FF_q\),
and let \(\subgrp{G}\) be a subgroup of prime order \(\ell\)
(where \(\ell\) is not equal to the field characteristic \(p\)).
We suppose \(\subgrp{G}\) is defined over \(\FF_q\);
then,
the quotient isogeny \(\phi: \EC \to \EC/\subgrp{G}\)
is also defined over \(\FF_q\).

When we say $\subgrp{G}$ is defined over $\FF_q$,
this means $\subgrp{G}$ is \emph{Galois stable}:
that is, 
\(\pi(\subgrp{G}) = \subgrp{G}\)
(where \(\pi\) is the \(q\)-power Frobenius endomorphism).
We will mostly be concerned with algorithms taking \(x(G)\) as an input,
so it is worth noting that
\[
    x(G) \in \FF_{q^{k'}}
    \qquad
    \text{where}
    \qquad
    k' := 
    \begin{cases}
        k & \text{ if \(k\) is odd}
        \,,
        \\
        k/2 & \text{if \(k\) is even}
        \,.
    \end{cases}
\]
The set of projective \(x\)-coordinates of the nonzero kernel
points is
\[
    \XSet{G} := \big\{(X_P:Z_P) = x(P): P \in \subgrp{G}\setminus\{0\}\big\}
    \subset \PP^1(\FF_{q^{k'}})
    \,;
\]
each \(X_P/Z_P\) corresponds to a root of the kernel polynomial \(D(X)\),
and vice versa.
If \(\#\subgrp{G}\) is an odd prime~\(\ell\),
then \(\#\XSet{G} = (\ell-1)/2\).

\subsection{The isogeny evaluation problem}

We want to evaluate the isogeny \(\phi: \EC \to \EC/\subgrp{G}\).
More precisely,
we want efficient solutions to the problem of Definition~\ref{def:eval}:

\begin{definition}[Isogeny Evaluation]
\label{def:eval}
    Given an elliptic curve $\EC$ over \(\FF_q\), 
    a list of points $(P_1,\ldots,P_n)$
    in $\EC(\FF_q)$,
    and a finite subgroup $\mathcal{G}$ of $\EC$ corresponding to the separable isogeny $\phi: \EC \to \EC/\mathcal{G}$, 
    compute $(\phi(P_1),\ldots,\phi(P_n))$.  
\end{definition}

In most cryptographic applications,
the number \(n\) of evaluation points is relatively small,
especially compared to the isogeny degree \(\ell\).
We do \emph{not} assume the codomain curve \(\EC/\mathcal{G}\) is known.
If required,
an equation for the codomain curve can be interpolated
through the image of well-chosen evaluation points.

For each separable isogeny \(\phi\) of degree \(d\)
defined over \(\FF_q\),
there exists a sequence of primes \((\ell_1,\ldots,\ell_n)\)
and a sequence of isogenies \((\phi_1,\ldots,\phi_n)\),
all defined over \(\FF_q\),
such that
\(\phi_{n}\circ\cdots\phi_1\)
and 
\begin{itemize}
    \item \(\phi_i = [\ell_i]\) (the non-cyclic case) or
    \item \(\phi_i\) has cyclic kernel of order \(\ell_i\).
\end{itemize}
The kernel of \(\phi_1\) is \(\ker\phi\cap\EC[\ell_1]\),
and so on.
The multiplication maps \([\ell_i]\) 
can be computed in \(O(\log \ell_i)\) \(\FF_q\)-operations,
so we reduce quickly to the case
where \(\phi\) has prime degree \(\ell\),
assuming the factorization of \(d\) is known
(which is always the case in our applications).

In general,
the isogeny evaluation problem
can be reduced to evaluating the map
\(\alpha \mapsto D(\alpha)\),
where \(D\) is the kernel polynomial
and \(\alpha\) is in \(\FF_q\) 
or some \(\FF_q\)-algebra
(see e.g.~\cite[\S4]{BDLS20}).
We note that the polynomial \(D\) does \emph{not}
need to be explicitly computed itself.

\subsection{The Costello-Hisil algorithm}
The Costello-Hisil algorithm~\cite{costello-hisil} is the
state-of-the-art for evaluating isogenies.\footnote{%
    The algorithms of~\cite{costello-hisil} focus on odd-degree isogenies, 
    treating 2 and 4-isogenies as special cases; 
    Renes~\cite{joost} extends the approach to even-degree isogenies,
    and provides a satisfying theoretical framework.
}
This algorithm is a variation of Vélu's formul\ae{} 
working entirely on the level of \(x\)-coordinates,
using the fact that for an \(\ell\)-isogeny \(\phi\)
of Montgomery models with kernel \(\subgrp{G}\),
the rational map on \(x\)-coordinates is
\begin{equation}
    \label{eq:CH-phi_x-affine}
    \phi_x(x) 
    = 
    x \cdot \Biggr(
        \prod_{i=1}^{(\ell-1)/2} \Big( \frac{x \cdot x([i]G) -1 }{x - x([i]G)} \Big)
    \Biggl)^2
    \,.
\end{equation}

Moving to projective coordinates \((U:V)\) such that \(x = U/V\)
and using the fact that \(\XSet{G} = \{(x([i]G):1): 1 \le i \le
(\ell-1)/2\}\),
Eq.~\eqref{eq:CH-phi_x-affine}
becomes
\[
    \phi_x\big( (U:V)\big)
    =
    (U':V')
\]
where
\begin{equation}
    \label{eq:CH-phi_x-projective}
    \begin{cases}
        U' = U \big[ \prod_{(X_Q:Z_Q)\in\XSet{G}} (U X_Q - VZ_Q) \big]^2
        ,
        \\
        V' = V \big[ \prod_{(X_Q:Z_Q)\in\XSet{G}} (U Z_Q - VX_Q) \big]^2
        .
    \end{cases}
\end{equation}

Algorithm~\ref{alg:CH-evaluation} (from~\cite{costello-hisil})
and Algorithm~\ref{alg:CH-evaluation-gen} (our space-efficient variant)
compute \(\phi_x\) at a series of input points
using an efficient evaluation of the expressions in~\eqref{eq:CH-phi_x-projective}.
For the moment, we assume that we have subroutines
\begin{itemize}
    \item
        \KernelPoints (for Algorithm~\ref{alg:CH-evaluation}):
        given \((X_G:Z_G)\),
        returns \(\XSet{G}\) as a list.
    \item
        \KernelRange (for Algorithm~\ref{alg:CH-evaluation-gen}):
        a \emph{generator} coroutine
        which, given \((X_G:Z_G)\),
        constructs and yields the elements of \(\XSet{G}\) to the caller
        one by one.
    \item
        \texttt{CrissCross} (for Algorithms~\ref{alg:CH-evaluation}
        and~\ref{alg:CH-evaluation-gen}, from~\cite[Algorithm~1]{costello-hisil})
        takes \((\alpha,\beta,\gamma,\delta)\) in \(\FF_{q^k}^4\)
        and returns \((\alpha\delta + \beta\gamma, \alpha\delta - \beta\gamma)\)
        in \(\FF_{q^k}^2\)
        at a cost of \(2\Mults + 2\Adds\).
\end{itemize}
We discuss algorithms to implement \KernelPoints and \KernelRange
in~\S\ref{sec:enumerate}

\begin{algorithm}
    \caption{Combines Algorithms~3 and~4 from~\cite{costello-hisil}
        to evaluate an \(\ell\)-isogeny of
        Montgomery models at a list of input points.
        The total cost is 
        \(
            2n\ell\Mults
            +
            2n\Squares
            + 
            ((n+1)(\ell+1)-2)\Adds 
        \),
        \emph{plus}
        the cost of \texttt{KernelPoints}.
    }
    \label{alg:CH-evaluation}
    \DontPrintSemicolon
    \KwIn{The \(x\)-coordinate \((X_G:Z_G)\) of a generator \(G\)
        of the kernel of an \(\ell\)-isogeny \(\phi\),
        and a list of evaluation points 
        $((U_i:V_i): 1 \le i \le n)$
    }
    \KwOut{The list of images
        \(((U_i':V_i') = \phi_x((U_i:V_i)): 1 \le i \le n)\)
    }
    %
    %\Cost{$4(d(n+1)-1)\textbf{M} + 2(n+d-1)\textbf{S} + 2((d+1)n +(3d-4))\textbf{a}.$}
    %% \(((X_1,Z_1),\ldots,(X_d,Z_d)) \gets \texttt{KernelPoints}((X_1:Z_1),(\hat{A},\hat{C})) \) \tcp{Algorithm 2 from \cite{costello-hisil}} 
    \( ((X_1, Z_1),\ldots, (X_{(l-1)/2}, Z_{(l-1)/2})) \gets \)
    \KernelPoints{\((X_G:Z_G)\)}
    \tcp*{See \S\ref{sec:enumerate}} 
    \For{\(1 \le i \le (\ell-1)/2\)}{
        \(
            (\hat{X_i},\hat{Z}_i)
            \gets
            (X_i + Z_i, X_i - Z_i)
        \)
        \tcp*{2\Adds} 
    }
    \For{\(i = 1\) \textbf{to} \(n\)}{
        \((\hat{U}_i,\hat{V}_i) \gets (U_i+V_i,U_i-V_i)\) 
        \tcp*{2\Adds} 
        \( (U_i',V_i') \gets (1,1) \)
        \;
        \For{\(j = 1\) \textbf{to} \((\ell-1)/2\)}{
            \(
                (t_0,t_1) 
                \gets
                \texttt{CrissCross}(\hat{X}_j,\hat{Z}_j,\hat{U}_i,\hat{V}_i)
            \) 
            \label{alg:CH-evaluation:CrissCross}
            \tcp*{2\Mults + 2\Adds} % Algorithm 1 from \cite{costello-hisil}} 
            \(
                (U_i',V_i') 
                \gets
                (t_0 \cdot U_i',t_1 \cdot V_i')
            \)
            \tcp*{2\Mults} 
        }
        \((U_i',V_i') \gets (U_i\cdot(U_i')^2,V_i\cdot(V_i')^2)\)
        \tcp*{2\Mults + 2\Squares} 
        \label{alg:CH-evaluation:square}
    }
    \Return{\(((U'_1,V'_1),\ldots,(U'_n,V'_n))\)}
\end{algorithm}

\begin{algorithm}
    \caption{A generator-based version of
        Algorithm~\ref{alg:CH-evaluation},
        with much lower space requirements
        when \(\ell \gg n\).
        The total cost 
        is \(
        2n\ell\Mults
        +
        2n\Squares
        +
        (2n + (\ell-1)(n+1))\Adds
        \),
        \emph{plus} the cost of a full run of
        \texttt{KernelRange}.
    }
    \label{alg:CH-evaluation-gen}
    \DontPrintSemicolon
    \KwIn{The \(x\)-coordinate \((X_G:Z_G)\) of a generator \(G\)
        of the kernel of an \(\ell\)-isogeny \(\phi\),
        and a list of evaluation points 
        $((U_i:V_i): 1 \le i \le n)$
    }
    \KwOut{The list of images
        \(((U_i':V_i') = \phi_x((U_i:V_i)): 1 \le i \le n)\)
    }
    %
    %\Cost{$4(d(n+1)-1)\textbf{M} + 2(n+d-1)\textbf{S} + 2((d+1)n +(3d-4))\textbf{a}.$}
    %% \(((X_1,Z_1),\ldots,(X_d,Z_d)) \gets \texttt{KernelPoints}((X_1:Z_1),(\hat{A},\hat{C})) \) \tcp{Algorithm 2 from \cite{costello-hisil}} 
    %
    \For{\(1 \le i \le n\)}{
        \((\hat{U}_i,\hat{V}_i) \gets (U_i+V_i,U_i-V_i)\) 
        \tcp*{2\Adds} 
        \(
            (U_i',V_i') 
            \gets
            (1,1)
        \) 
    }
    \For(\tcp*[f]{See \S\ref{sec:enumerate}}){\((X:Z)\) \textbf{in} \KernelRange{\((X_G:Z_G)\)}}{
        \(
            (\hat{X},\hat{Z})
            \gets
            (X + Z, X - Z)
        \)
        \tcp*{2\Adds} 
        \For{\(1 \le i \le n\)}{
            \(
                (t_0,t_1) 
                \gets
                \texttt{CrissCross}(\hat{X},\hat{Z},\hat{U}_i,\hat{V}_i)
            \) 
            \tcp*{2\Mults + 2\Adds} % Algorithm 1 from \cite{costello-hisil}} 
            \(
                (U_i',V_i') 
                \gets
                (t_0 \cdot U_i',t_1 \cdot V_i')
            \)
            \tcp*{2\Mults} 
        }
    }
    \For{\(1 \le i \le n\)}{
        \((U_i',V_i') \gets (U_i\cdot(U_i')^2,V_i\cdot(V_i')^2)\)
        \tcp*{2\Mults + 2\Squares} 
    }
    \Return{\(((U'_1,V'_1),\ldots,(U'_n,V'_n))\)}
\end{algorithm}

%% \begin{algorithm}
%%     \caption{\texttt{CrissCross}}
%%     \label{alg:CHCrissCross}
%%     \DontPrintSemicolon
%%     \KwIn{$(\alpha,\beta,\gamma,\delta)$ in \(\FF_q\) or \(\FF_{q^k}\)}
%%     \KwOut{$(\alpha\delta + \beta\gamma, \alpha\delta - \beta\gamma)$}
%%     % \Cost{$2\Mults + 2\Adds$}
%%     %
%%     \((t_1,t_2) \gets (\alpha\cdot\delta, \beta\cdot\gamma)\)
%%     \tcp*{2M}
%%     %
%%     \Return{\((t_1 + t_2, t_1 - t_2)\)}
%%     \tcp*{2a}
%% \end{algorithm}

\section{%%%%%%%%%%%%%%%%%%%%%%%%%%%%%%%%%%%%%%%%%%%%%%%%%%%%%%%%%%%%%%%%%%%%%%%
    Accelerating Vélu: faster iteration over the kernel
}%%%%%%%%%%%%%%%%%%%%%%%%%%%%%%%%%%%%%%%%%%%%%%%%%%%%%%%%%%%%%%%%%%%%%%%%%%%%%%%
\label{sec:enumerate}

Let \(\EC/\FF_q\) be an elliptic curve,
and let \(G\) be a point of prime order \(\ell\)
in \(\EC\).
For simplicity,
in this section
we will assume that \(G\) is defined over \(\FF_q\),
but all of the results here apply when \(G\) is defined over an
extension \(\FF_{q^k}\):
in that case, the only change is
that \Mults, \Squares, and \Adds represent operations in the extension
field \(\FF_{q^k}\),
while \ConstMults represents multiplication of an element of \(\FF_{q^k}\)
by a curve constant of the subfield \(\FF_{q}\)
(which is roughly \(k\) times cheaper than \Mults).
We will return to the case where \(G\) is defined over an extension
in~\S\ref{sec:frobenius},
where we can combine results from this section with the action of
Frobenius.

\subsection{Kernel point enumeration and differential addition chains}

We now turn to the problem of
enumerating the set  \(\XSet{G}\).
This process, which we call \emph{kernel point enumeration},
could involve constructing the entire set (as in \KernelPoints)
or constructing its elements one by one (for \KernelRange).

For \(\ell = 2\) and \(3\), there is nothing to be done
because \(\XSet{G} = \{(X_G:Z_G)\}\);
so from now on we consider the case \(\ell > 3\).

We allow ourselves two curve operations for kernel point
enumeration: \xADD and \xDBL.
In~\S\ref{sec:frobenius},
where \(G\) is assumed to be defined over a nontrivial extension of the
base field,
we will also allow the Frobenius endomorphism.

Every algorithm constructing a sequence of elements of \(\XSet{G}\)
using a series of \xADD and \xDBL instructions
corresponds to a \emph{modular differential addition chain}.

\begin{definition}
    \label{def:MDAC}
    A \textbf{Modular Differential Addition Chain (MDAC)}
    for a set \(S \subset \ZZ/\ell\ZZ\)
    is a sequence of integers
    \(
        (c_0, c_1, c_2, \ldots, c_n)
    \)
    such that
    \begin{enumerate}
        \item
            every element of \(S\)
            is represented by some \(c_i \pmod{\ell}\),
        \item
            \(c_0 = 0\) and \(c_1 = 1\), 
            and
        \item
            \label{condition:DAC}
            for each \(1 < i \le n\)
            there exist \(0 \le j(i), k(i), d(i) < i\)
            such that \(c_i \equiv c_{j(i)} + c_{k(i)} \pmod{\ell}\)
            and \(c_{j(i)} - c_{k(i)} \equiv c_{d(i)} \pmod{\ell}\).
    \end{enumerate}
\end{definition}

Algorithms to enumerate \(\XSet{G}\)
using \xADD and \xDBL
correspond to 
MDACs \((c_0,\ldots,c_n)\) for \(\{1,\ldots,(\ell-1)/2\}\):
the algorithm
starts with \(x([c_0]G) = x(0) = (1:0)\)
and \(x([c_1]G) = x(G) = (X_G:Z_G)\),
then computes each \(x([c_i]G)\)
using
\[
    x([c_i]G) 
    = 
    \begin{cases}
        \xADD(x([c_{j(i)}]G),x([c_{k(i)}]G),x([c_{d(i)}]G))
        & \text{if } d(i) \not= 0
        \,,
        \\
        \xDBL([c_{j(i)}]G)
        & \text{if } d(i) = 0
        \,.
    \end{cases}
\]

\subsection{Additive kernel point enumeration}

The classic approach is to compute \(\XSet{G}\)
using repeated \xADD{}s.
Algorithm~\ref{alg:basic-enumerate}
is Costello and Hisil's \texttt{KernelPoints}
function~\cite[Algorithm~2]{costello-hisil}.
This
corresponds to the MDAC \((0,1,2,3,\ldots,(\ell-1)/2)\)
computed by repeatedly adding \(1\)
(in the notation of Definition~\ref{def:MDAC},
\((j(i),k(i),d(i)) = (i-1,1,i-2)\)),
except for \(2\) which is computed by doubling \(1\).
The simplicity of this MDAC
means that Algorithm~\ref{alg:basic-enumerate}
adapts almost trivially to \KernelRange
using a relatively small internal state:
in order to generate the next \((X_{i+1}:Z_{i+1})\),
we need only keep the values
of \((X_{i}:Z_{i})\),
\((X_{i-1}:Z_{i-1})\),
and \((X_{1}:Z_{1})\).

\begin{algorithm}
    \caption{Basic kernel point enumeration by repeated addition.
        Uses exactly \(1\) \xDBL and \((\ell-5)/2\) \xADD operations
        (for prime \(\ell > 3\)).
    }
    \label{alg:basic-enumerate}
    \DontPrintSemicolon
    \KwIn{The \(x\)-coordinate \((X_G:Z_G)\) 
        of the generator \(G\) of a cyclic subgroup 
        of order \(\ell\) in \(\EC(\FF_q)\)
    }
    \KwOut{\(\XSet{G}\) as a list}
    \((X_1:Z_1) \gets (X_G:Z_G)\)
    \;
    \((X_2:Z_2) \gets \xDBL\big((X_G:Z_G)\big)\)
    \;
    \For(\tcp*[f]{Invariant: \((X_i:Z_i) = x([i]G)\)}){\(i = 3\) \textbf{to} \((\ell-1)/2\)}{
        \(
            (X_i:Z_i) 
            \gets
            \xADD\big((X_{i-1}:Z_{i-1}),(X_{G}:Z_{G}),(X_{i-2},Z_{i-2})\big)
        \)
        \label{alg:basic-enumerate:xADD}
        \;
    }
    \Return{\(\big((X_1:Z_1),\ldots,(X_{(\ell-1)/2}:Z_{(\ell-1)/2})\big)\)}
\end{algorithm}

\subsection{\texorpdfstring{Replacing \xADD{}s with \xDBL{}s}{Replacing xADDs with xDBLs}}

Comparing \(x\)-only operations on Montgomery curves,
replacing an \(\xADD\) with an \(\xDBL\)
trades 2\Mults and 2\Adds for 1\ConstMults.
We would therefore like to replace as many \xADD{}s as possible
in our kernel enumeration with \xDBL{}s.

As a first attempt, 
we can replace 
Line~\ref{alg:basic-enumerate:xADD} of
Algorithm~\ref{alg:basic-enumerate}
with
\[
    (X_i:Z_i) 
    \gets
    \begin{cases}
        \xDBL((X_{i/2}:Z_{i/2}))
        & \text{if \(i\) is even},
        \\
        \xADD\big((X_{i-1}:Z_{i-1}),(X_{G}:Z_{G}),(X_{i-2},Z_{i-2})\big)
        & \text{if \(i\) is odd}.
    \end{cases}
\]
But applying this trick systematically
requires storing many more intermediate values,
reducing the efficiency of \KernelRange.
It also only replaces half of the \xADD{}s with \xDBL{}s,
and it turns out that we can generally do much better.

\subsection{Multiplicative kernel point enumeration}

We can do better for a large class of \(\ell\)
by considering the quotient
\[
    M_\ell := \big(\ZZ/\ell\ZZ)^\times/\subgrp{\pm1}
    \,.
\]
(We emphasize that \(M_\ell\) is a quotient of the \emph{multiplicative}
group.)
For convenience,
we write
\[
    m_\ell := \# M_\ell = (\ell-1)/2
    \,.
\]

We can now reframe the problem of enumerating \(\XSet{G}\)
as the problem of enumerating a complete set of 
representatives for \(M_\ell\).
The MDAC of Algorithm~\ref{alg:basic-enumerate}
computes the set of representatives
\(\{1,2,\ldots,m_\ell\}\),
but for the purposes of enumerating \(\XSet{G}\),
\emph{any} set of representatives will do.
Example~\ref{example:2-powering}
is particularly useful.

\begin{example}
    \label{example:2-powering}
    Suppose \(2\) generates \(M_\ell\).
    This is the case if \(2\) is a primitive element modulo \(\ell\)---that
    is, if \(2\) has order \((\ell-1)\) modulo \(\ell\)---but also if
    \(2\) has order \((\ell-1)/2\) modulo \(\ell\)
    and \(\ell \equiv 3 \pmod{4}\).
    In this case
    \[
        M_\ell = \{2^i \bmod{\ell}: 0 \le i < m_\ell\}
        \,,
    \]
    so 
    \((0,1,2,4,8,\ldots,2^{m_\ell})\)
    is an MDAC for \(M_\ell\)
    that can be computed using \emph{only}
    doubling, and \emph{no} differential additions.
\end{example}
The \KernelPoints and \KernelRange
driven by
the MDAC of Example~\ref{example:2-powering}
replace \emph{all} of the \((\ell-1)/2 - 2\) \xADD{}s
in Algorithm~\ref{alg:basic-enumerate}
with cheaper \xDBL{}s:
we save \(\ell-5\) \Mults and \(\ell-5\) \Adds at the cost of
\((\ell-5)/2\) \ConstMults.
The \KernelRange based on this MDAC
is particularly simple:
each element depends only on its predecessor,
so the internal state consists of a single
\((X_i:Z_i)\).

So, how often does this trick apply?
Theoretically,
the quantitative form of Artin's primitive root conjecture
(proven by Hooley under GRH)
says that \(M_\ell = \subgrp{2}\) for a little over half of all`\(\ell\)
(see~\cite{Wagstaff}).
Experimentally, 
\(5609420\) of the first \(10^7\) odd primes \(\ell\) satisfy \(M_\ell = \subgrp{2}\).

One might try to generalize Example~\ref{example:2-powering}
to other generators of \(M_\ell\):
for example, if \(M_\ell = \subgrp{3}\),
then we could try to find 
an MDAC for \(\{3^i\bmod\ell: 0\le i < (\ell-1)/2\}\).
But this is counterproductive:
\(x\)-only tripling (or multiplication by any scalar \(> 2\))
is \emph{slower} than differential addition.

\subsection{Stepping through cosets}

What can we do when \(M_\ell \not= \subgrp{2}\)?
A productive generalization
is to let
\[
    A_\ell := \subgrp{2} \subseteq M_\ell
    \quad
    \text{and}
    \quad
    a_\ell := \# A_\ell
    \,,
\]
and to try to compute a convenient decomposition of \(M_\ell\)
into cosets of \(A_\ell\).
Within each coset,
we can compute elements using repeated \xDBL{}s
as in Example~\ref{example:2-powering};
then, it remains to step from one coset into another using differential
additions.

This can be done in a particularly simple way for
the primes \(\ell\) such that
\begin{equation}
    \label{eq:2-3-condition}
    M_\ell \text{ is generated by } 2 \text{ and } 3.
    \tag{$\ast$}
\end{equation}
If~\eqref{eq:2-3-condition} holds, then
\[
    M_\ell = \bigsqcup_{i=0}^{m_\ell/a_\ell-1} 3^iA_\ell
    \,.
\]
We can move from the \(i\)-th to the \((i+1)\)-th coset
using the elementary relations
\begin{equation}
    \label{eq:tripling-relation}
    \begin{cases}
        c\cdot2^{j+1} + c\cdot2^j = 3c\cdot2^j
        \\
        c\cdot2^{j+1} - c\cdot2^j = c\cdot2^j
    \end{cases}
    \quad
    \text{for all integers \(c\) and \(j \ge 0\).}
\end{equation}
In particular,
if we have enumerated some coset \(3^i A_{\ell}\) by repeated doubling,
then we can compute an element of \(3^{i+1}A_{\ell}\)
by applying a differential addition to any two consecutive elements
of \(3^iA_{\ell}\) (and the difference is the first of them).
Algorithm~\ref{alg:2-3-enumerate}
minimises storage overhead by using the last two elements of the
previous coset to generate the first element of the next one.
The \KernelRange
of Algorithm~\ref{alg:2-3-enumerate}
therefore has an internal state of only two \(x\)-coordinates---so
not only is it faster than the \KernelRange of Algorithm~\ref{alg:basic-enumerate}
where it applies,
but it also has a smaller memory footprint.

\begin{algorithm}
    \caption{Kernel enumeration
        for \(\ell > 3\) satisfying~\eqref{eq:2-3-condition}.
        Cost: \((1-1/a_\ell)\cdot m_\ell\) \xDBL{}s and \(m_\ell/a_\ell-1\) \xADD{}s.
    }
    \label{alg:2-3-enumerate}
    \DontPrintSemicolon
    \KwIn{Projective \(x\)-coordinate \((X_G:Z_G)\)
    of the generator \(G\) of a cyclic subgroup of order \(\ell\) in
    \(\EC(\FF_q)\),
    where \(\ell\) satisfies~\eqref{eq:2-3-condition}.
    }
    \KwOut{\(\XSet{G}\) as a list}
    \((a,b) \gets (a_\ell,m_\ell/a_\ell)\) 
    \;
    \For(\tcp*[f]{Invariant: \((X_{ai+j}:Z_{ai+j}) = x([3^i2^{i(a-2) + (j-1)}]G)\)}){\(i = 0\) \textbf{to} \(b - 1\)}{
        \uIf{\(i = 0\)}{
            \((X_1:Z_1) \gets (X_G:Z_G)\)
            \;
        }
        \Else(\tcp*[f]{Compute new coset representative}){
            \(
                (X_{ai+1}:Z_{ai+1}) 
                \gets
                \xADD\big((X_{ai}:Z_{ai}),(X_{ai-1}:Z_{ai-1}),(X_{ai-1}:Z_{ai-1})\big)
            \)
        }
        \For(\tcp*[f]{Exhaust coset by doubling}){\(j = 2\) \textbf{to} \(a\)}{
            \(
                (X_{ai+j}:Z_{ai+j})
                \gets
                \xDBL\big((X_{ai+j-1}:Z_{ai+j-1})\big)
            \)
        }
    }
    \Return{\(\big((X_1:Z_1),\ldots,(X_{(\ell-1)/2}:Z_{(\ell-1)/2})\big)\)}
\end{algorithm}

Algorithm~\ref{alg:2-3-enumerate} performs better the closer \(a_\ell\)
is to \(m_\ell\).
In particular, when \(A_\ell = M_\ell\),
it uses \(m_\ell-1\) \xDBL{}s and no \xADD{}s at all.
The worst case for Algorithm~\ref{alg:2-3-enumerate}
is when the order of \(2\) in \(M_\ell\) is as small as possible:
that is, when \(\ell = 2^k-1\).
In this case \(a_\ell = k\),
and compared with Algorithm~\ref{alg:basic-enumerate}
we still reduce the number of \xADD{}s to be done
by a factor of \(k\).

\subsection{The remaining primes}
\label{sec:other-primes}

While 1878 of the 2261 odd primes \(\ell \le 20000\) satisfy~\eqref{eq:2-3-condition},
there are still 383 primes that do not.
We can, to some extent,
adapt Algorithm~\ref{alg:2-3-enumerate}
to handle these primes,
but on a case-by-case basis and with somewhat less satisfactory results.

For example, the CSIDH-512 parameter set
specifies 74 isogeny-degree primes 
\[
    \ell = 3, 5, 7, 11, 13, \ldots, 367, 373, \text{ and } 587
    \,.
\]
Of these 74 primes, all but seven satisfy~\eqref{eq:2-3-condition}:
the exceptions 
are \(\ell = 73\), \(97\), \(193\), \(241\), \(313\), and \(337\).
Table~\ref{tab:CSIDH-512}
lists these primes and a candidate decomposition of \(M_\ell\) into
cosets of \(A_\ell\).
In each case, 
we need to produce either an element of \(5A_\ell\) or \(7A_\ell\).
This can certainly be done using previously-computed elements,
but it requires careful tracking of those elements,
which implies a larger internal state and a more complicated execution
pattern, ultimately depending on the value of \(\ell\).

\begin{table}[ht]
    \centering
    \caption{Primes \(\ell\) in the CSIDH-512 parameter set that do not satisfy~\eqref{eq:2-3-condition}.
    }
    \label{tab:CSIDH-512}
    \begin{tabular}{r@{\;}|@{\;}r@{\;}|@{\;}c@{\;}|@{\;}l@{\;}|@{\;}l}
        \toprule
        Prime \(\ell\) & \(a_\ell\) & \([M_\ell:\subgrp{2,3}]\) & Coset decomposition of \(M_\ell\) & Notes
        \\
        \midrule
        73       & 9 & 2 & \(M_{73} = A_{73}\sqcup3A_{73}\sqcup5A_{73}\sqcup5\cdot3A_{73}\)
        % any of [ 5, 7, 10, 11, 13, 14, 15, 17, 20, 21, 22, 26, 28, 29, 30, 31, 33, 34 ]
        \\
        97       & 24 & 2 & \(M_{97} = A_{97}\sqcup 5A_{97}\) & \(3\) is in \(A_{97}\)
        % any of [ 5, 7, 10, 13, 14, 15, 17, 19, 20, 21, 23, 26, 28, 29, 30, 34, 37, 38, 39, 40, 41, 42, 45, 46 ]
        \\
        193      & 48 & 2 & \(M_{193} = A_{193}\sqcup 5A_{193}\) & \(3\) is in \(A_{193}\)
        % any of [ 5, 10, 11, 13, 15, 17, 19, 20, 22, 26, 29, 30, 33, 34, 35, 37, 38, 39, 40, 41, 44, 45, 47, 51, 52, 53, 57, 58, 60, 61, 66, 68, 70, 71, 73, 74, 76, 77, 78, 79, 80, 82, 87, 88, 89, 90, 91, 94 ]
        \\
        241      & 12 & 2 & \(M_{241} = \big(\bigsqcup_{i=0}^{4}3^iA_{241}\big) \sqcup \big(\bigsqcup_{i=0}^{4}7\cdot3^iA_{241}\big)\)
        % any of [ 7, 11, 13, 14, 17, 19, 21, 22, 23, 26, 28, 31, 33, 34, 35, 37, 38, 39, 42, 43, 44, 46, 51, 52, 55, 56, 57, 62, 63, 65, 66, 68, 69, 70, 71, 73, 74, 76, 78, 84, 85, 86, 88, 89, 92, 93, 95, 99, 101, 102, 103, 104, 105, 109, 110, 111, 112, 114, 115, 117 ]
        \\
        307      & 51 & 3 & \(M_{307} = A_{307} \sqcup 5A_{307} \sqcup 7A_{307}\) & \(3\) is in \(A_{313}\)
        % OR: use 5 and 25, or use 7 and 49
        \\
        313      & 78 & 2 & \(M_{313} = A_{313} \sqcup 5A_{313}\) & \(3\) is in \(A_{193}\)
        % any of [ 5, 7, 10, 14, 15, 17, 20, 21, 23, 28, 30, 31, 34, 37, 40, 41, 42, 43, 45, 46, 47, 51, 53, 55, 56, 59, 60, 61, 62, 63, 65, 67, 68, 69, 73, 74, 77, 80, 82, 84, 86, 89, 90, 91, 92, 93, 94, 95, 101, 102, 106, 109, 110, 111, 112, 118, 120, 122, 123, 124, 125, 126, 127, 129, 130, 131, 133, 134, 135, 136, 138, 141, 145, 146, 148, 149, 153, 154 ]
        \\
        337      & 21 & 2 & \(M_{337} =
        \big(\bigsqcup_{i=0}^33^iA_{337}\big)\sqcup\big(\bigsqcup_{i=0}^35\cdot3^iA_{337}\big)\)
        % any of [ 5, 10, 11, 15, 17, 19, 20, 22, 23, 29, 30, 31, 33, 34, 35, 38, 40, 44, 45, 46, 51, 53, 57, 58, 59, 60, 61, 62, 65, 66, 67, 68, 69, 70, 71, 73, 76, 77, 80, 83, 87, 88, 89, 90, 92, 93, 97, 99, 101, 102, 105, 106, 109, 114, 116, 118, 119, 120, 122, 124, 125, 127, 130, 132, 133, 134, 135, 136, 138, 139, 140, 142, 143, 146, 151, 152, 153, 154, 157, 159, 160, 161, 163, 166 ]
        \\
        \bottomrule
    \end{tabular}
\end{table}

\begin{example}
    Consider \(\ell = 97\).
    In this case, \(3\) is in \(A_{97}\) 
    (in fact \(3 \equiv 2^{19}\pmod{97}\)),
    and we find that \(M_{97} = A_{97}\sqcup5A_{97}\).

    To adapt Algorithm~\ref{alg:2-3-enumerate}
    to this case,
    we can still enumerate \(A_{97}\)
    using repeated doubling.
    Then,
    we need to construct an element of \(5A_{97}\) from elements of \(A_{97}\),
    for which we can use a differential addition like \(5\cdot2^i = 2^{i+2} + 2^i\)
    (with difference \(3\cdot2^i\)) or \(5\cdot2^i = 2^{i+1} + 3\cdot2^i\) (with difference \(2^i\)).
    Each involves near powers of \(2\) (modulo \(97\)),
    but also \(3\cdot2^i\)---which we know is in \(A_{97}\),
    so it need not be recomputed,
    but we need to know that \(3\cdot2^i \equiv 2^{i+19} \pmod{97}\)
    so that we can identify and store the \(x\)-coordinate corresponding
    to \(3^i\) (for a chosen \(i\)) while enumerating \(A_{97}\).
    The end result is an algorithm
    that uses one \xADD and 48 \xDBL{}s, just like
    Algorithm~\ref{alg:2-3-enumerate},
    but the internal state is slightly larger
    and the more complicated execution pattern specific to \(\ell = 97\).

    Alternatively, after (or while) enumerating \(A_{97}\),
    we could just recompute \(3 = 1 + 2\) (difference \(1\))
    to get \(5\) as \(1 + 4\) (difference \(3\))
    or \(2 + 3\) (difference \(1\)),
    but this recomputation of \(3\) is redundant.
\end{example}

Ultimately, 
there does not seem to be any ``one size fits all'' generalization of
Algorithm~\ref{alg:2-3-enumerate} for enumerating \(\XSet{G}\)
without either a more
complicated state or redundant recomputations.
The obvious approach
of finding an MDAC to enumerate a set of representatives
for \(M_\ell/\subgrp{2,3}\)
and then using Algorithm~\ref{alg:2-3-enumerate} 
to exhaust the coset containing each representative
can give reasonable results for many \(\ell\)
not satisfying~\eqref{eq:2-3-condition},
but the savings are generally not optimal.

\subsection{(In)Compatibility with Vélusqrt}

One natural question is whether these techniques can be used to further
accelerate the Vélusqrt algorithm of~\cite{BDLS20},
which can evaluate
isogenies of large prime degree \(\ell\) 
in \(\softO(\sqrt{\ell})\)
(with \(O(\sqrt{\ell})\) space).
Vélusqrt never explicitly computes all of~\(\XSet{G}\).
Instead,
it relies on the existence of a decomposition
\begin{equation}
    \label{eq:velusqrt-decomposition}
    S := \{1,3,5,\ldots,\ell-2\}
    =
    (I + J)\sqcup(I - J)\sqcup K
\end{equation}
where \(I\), \(J\), and \(K\)
are sets of integers of size \(O(\sqrt{\ell})\)
such that the maps \((i,j)\to i+j\)
and \((i,j) \to i-j\)
are injective with disjoint images.
In~\cite{BDLS20},
these sets are
\begin{align*}
    I & := \{2b(2i+1) : 0 \le i < b'\}
    \,,
    \\
    J & := \{2j + 1  0 \le j < b\}
    \,,
    \\
    K & := \{4bb' + 1, \ldots, \ell - 4, \ell - 2\}
\end{align*}
where 
\(b := \lfloor\sqrt{\ell-1}/2\rfloor\)
and
\(b' := \lfloor(\ell-1)/4b\rfloor\).
(Note that \(I\) contains ``giant steps'',
\(J\) contains ``baby steps'',
and \(K\) contains the rest of \(S\)).

The key thing to note here is that this decomposition is essentially
additive, and the elements of \(I\), \(J\), and \(K\) form arithmetic
progressions.
Algorithm~\ref{alg:2-3-enumerate},
however, is essentially multiplicative:
it works with subsets in geometric progression, not arithmetic
progression.
We cannot exclude the existence of subsets \(I\), \(J\), and \(K\)
of size \(O(\sqrt{\ell})\)
satisfying Equation~\eqref{eq:velusqrt-decomposition}
and which are amenable to enumeration by \(2\)-powering
or a variation of Algorithm~\ref{alg:2-3-enumerate}
for some, or even many \(\ell\),
but it seems difficult to construct nontrivial and useful examples.

\section{%%%%%%%%%%%%%%%%%%%%%%%%%%%%%%%%%%%%%%%%%%%%%%%%%%%%%%%%%%%%%%%%%%%%%%%
    Irrational kernel points: exploiting Frobenius
}%%%%%%%%%%%%%%%%%%%%%%%%%%%%%%%%%%%%%%%%%%%%%%%%%%%%%%%%%%%%%%%%%%%%%%%%%%%%%%%
\label{sec:frobenius}

Now suppose \(G\) is defined over a nontrivial extension \(\FF_{q^k}\)
of \(\FF_q\),
but \(\subgrp{G}\)
is defined over the subfield \(\FF_q\):
that is, it is Galois-stable.
In particular, the \(q\)-power Frobenius endomorphism \(\pi\) of \(\EC\),
which maps points in \(\EC(\FF_{q^k})\) to their conjugates
under \(\operatorname{Gal}(\FF_{q^k}/\FF_q)\),
maps \(\subgrp{G}\) into \(\subgrp{G}\). In Appendix~\ref{sec:compkernel} we 
show how we can find the point \(G\).

Since \(\pi\) maps \(\subgrp{G}\) into \(\subgrp{G}\),
it restricts to an endomorphism of \(\subgrp{G}\)---and
since the endomorphisms of \(\subgrp{G}\) are \(\ZZ/\ell\ZZ\),
and Frobenius has no kernel (so \(\pi\) is not \(0\) on \(\subgrp{G}\)),
it must act as multiplication by an eigenvalue \(\lambda \not= 0\) on \(\subgrp{G}\).
The precise value of \(\lambda\) is not important here,
but we will use the fact that 
\(\lambda\) has order \(k\) in~\((\ZZ/\ell\ZZ)^\times\)
and order \(k'\) in~\((\ZZ/\ell\ZZ)^\times/\subgrp{\pm1}\).

Now let
\[
    F_\ell := \subgrp{\lambda} \subseteq M_\ell
    \qquad
    \text{and}
    \qquad
    c_F := [M_\ell:F_\ell] = \frac{m_\ell}{k'}
    \,.
\]
Let \(R_0\) be a set of representatives for the cosets of \(F_\ell\) in
\(M_\ell\), and let \(S_0 = \{[r]G: r \in R_0\}\).
Note that
\[
    \#S_0 = (\ell-1)/k'
    \,.
\]

\subsection{The cost of Frobenius}
\label{sec:frobenius-cost}

In this section,
we seek to use the Galois action 
to replace (many) \Mults and \Squares
with a few \Frobs.
For this to be worthwhile,
\Frobs must be cheap:
and they are, even if this is not obvious
given the definition of the Frobenius map on \(\FF_{q^k}\)
as \(q\)-th powering.
It is important to note that we do not compute Frobenius
by powering.
Instead, we use the fact that Frobenius
is an \(\FF_q\)-linear map on \(\FF_{q^k}\) viewed as an
\(\FF_q\)-vector space:
that is,
Frobenius acts as a \(k\times k\) matrix 
(with entries in \(\FF_q\))
on the coefficient vectors of elements in \(\FF_{q^k}\).

The form of the Frobenius matrix, and the cost of applying it,
depends on the basis of \(\FF_{q^k}/\FF_q\).
For example:
\begin{enumerate}
    \item
        If \(k = 2\) and \(\FF_{q^2} = \FF_q(\sqrt{\Delta})\),
        then Frobenius simply negates \(\sqrt{\Delta}\)
        and the matrix is \(\operatorname{diag}(1,-1)\),
        so \(\Frobs \approx 0\).
    \item
        If \(\FF_{q^k}/\FF_q\) is represented with a normal
        basis, then the matrix represents a cyclic permutation,
        and again \(\Frobs \approx 0\).
\end{enumerate}

Even in the worst case where the basis of \(\FF_{q^k}/\FF_q\) has no
special Galois structure,
\Frobs is just the cost of multiplying a \(k\)-vector by
a \(k\times k\) matrix over \(\FF_q\):
that is, \(k^2\) \(\FF_{q}\)-multiplications
and \(k(k-1)\) \(\FF_{q}\)-additions.
This is close to the cost of a single \(\FF_{q^k}\)-muliplication
using the ``schoolbook'' method;
so when \(k \le 12\),
we have \(\Frobs \approx \Mults\) in the worst case.

\subsection{Galois orbits}

Each point $P \in \EC(\FF_{q^k})$
is contained in a \emph{Galois orbit} 
containing all the conjugates of $P$.
The kernel subgroup \(\subgrp{G}\) breaks up (as a set) into
\emph{Galois orbits}:
if we write
\[
    \mathcal{O}_P := \{ P, \pi(P), \ldots, \pi^{k-1}(P) \}
    \qquad
    \text{for } P \in \EC(\FF_{q^k})
    \,,
\]
then
\begin{equation}
    \label{eq:G-decomposition}
    \subgrp{G}
    =
    \{0\}
    \sqcup
    \begin{cases}
        \bigsqcup_{P \in S_0}\mathcal{O}_P
        & \text{if \(k\) is even},
        \\
        \big(\bigsqcup_{P \in S_0}\mathcal{O}_P\big)
        \sqcup
        \big(\bigsqcup_{P \in S_0}\mathcal{O}_{-P}\big)
        & \text{if \(k\) is odd}.
    \end{cases}
\end{equation}

To get a picture of where we are going,
recall from~\S\ref{sec:isogeny_eval_problem}
that in general,
isogeny evaluation can be reduced to evaluations
of the kernel polynomial
\[
    D(X) := \prod_{P \in S}(X - x(P))
    \,,
\]
where $S \subset \subgrp{G}$ is any subset such that
\(S \cap -S = \emptyset\) and \(S \cup -S = \subgrp{G}\setminus\{0\}\).
The decomposition of~\eqref{eq:G-decomposition}
can be seen in the factorization of \(D(X)\) over \(\FF_{q^k}\):
\begin{align*}
    D(X)
    =
    \prod_{P\in S}(X - x(P))
    & =
    \prod_{P\in S_0}\prod_{i=0}^{k'-1}(X - x(\pi^i(P)))
    \\
    & =
    \prod_{P\in S_0}\prod_{i=0}^{k'-1}(X - x(P)^{q^i})
    \,,
\end{align*}
and the factors corresponding to each \(P\) in \(S_0\)
are the irreducible factors of \(D\) over \(\FF_q\).
Transposing the order of the products,
if we let
\[
    D_0(X) := \prod_{P\in S_0}(X - x(P))
\]
then for $\alpha$ in the base field $\FF_q$, 
we can compute $D(\alpha)$ by computing \(D_0(\alpha)\)
and taking the norm:
\[
    D(\alpha) = \Norm(D_0(\alpha))
    \quad
    \text{for all }
    \alpha \in \FF_q
    \,.
\]
where
\[
    \Norm(x)
    := \prod_{i=0}^{k'-1}x^{q^i}
    =
    x(x(\cdots (x(x)^q)^q\cdots)^q)^q
    \,,
\]
which can be computed for the cost of \((k-1)\Frobs + (k-1)\Mults\)
(some multiplications can be saved with
more storage, but for small \(k\) this may not be worthwhile).

Similarly,
we can rewrite the rational map \(\phi_x\)
from~\eqref{eq:CH-phi_x-affine}
as
\[
    \phi_x(x) 
    = 
    x \cdot \Biggr(
        \prod_{P \in S} \Big( \frac{x \cdot x(P) -1 }{x - x(P)} \Big)
    \Biggl)^2
    = 
    x \cdot \Biggr(
        \prod_{P \in S_0}\prod_{i=0}^{k'-1} \Big( \frac{x \cdot x(P)^{q^i} -1 }{x - x(P)^{q^i}} \Big)
    \Biggl)^2
    \,.
\]
Evaluating \(\phi_x\) at \(\alpha\) in \(\FF_q\),
rearranging the products gives
\[
    \phi_x(\alpha) 
    = 
    \alpha \cdot \Biggr(
        \prod_{P \in S_0}\prod_{i=0}^{k'-1} \Big( \frac{\alpha \cdot
        x(P)^{q^i} -1 }{\alpha - x(P)^{q^i}} \Big)
    \Biggl)^2
    = 
    \alpha \cdot \Norm(\overline{\phi}_x(\alpha))^2
    \,,
\]
where
\[
    \overline{\phi}_x(X)
    := 
    \prod_{P \in S_0} \frac{X \cdot x(\pi^i(P)) -1 }{X - x(\pi^i(P))}
\]
Projectively,
from~\eqref{eq:CH-phi_x-projective}
we get \(\phi_x: (U:V) \mapsto (U':V')\)
where
\begin{align*}
    \label{eq:Frobenius-phi_x-projective}
    U' & = U \cdot \Big[ \prod_{i=0}^{k'-1}\prod_{P\in S_0} (U X_P^{q^i} - Z_P^{q^i} V) \Big]^2
    \,,
    \\
    V' & = V \cdot \Big[ \prod_{i=0}^{k'-1}\prod_{P\in S_0} (U Z_P^{q^i} - X_P^{q^i} V) \Big]^2
    \,,
\end{align*}
so if we set 
\begin{align*}
    F(U,V) := \prod_{P\in S_0}(U\cdot X_P - Z_P\cdot V)
    \quad\text{and}\quad
    G(U,V) := \prod_{P\in S_0}(U\cdot Z_P - X_P\cdot V)
    \,,
\end{align*}
then for \(\alpha\) and \(\beta\) in \(\FF_q\)
we get
\[
    \phi_x((\alpha:\beta))
    = (\alpha':\beta')
    :=
    \Big(
        \alpha \cdot \Norm(F(\alpha,\beta))^2
        :
        \beta \cdot \Norm(G(\alpha,\beta))^2
    \Big)
    \,.
\]

\subsection{Enumerating representatives for the Galois orbits.}
\label{sec:Galois-enumerate}

We now need to enumerate a set \(S_0\)
of representatives for the Galois orbits modulo \(\pm1\)
or, equivalently,
a set of representatives \(R_0\) for the cosets of \(F_\ell\) in~\(M_\ell\).
We therefore want good MDACs for \(M_\ell/F_\ell\).

Given an MDAC driving enumeration of the coset representatives,
there are obvious adaptations of Algorithms~\ref{alg:CH-evaluation}
and~\ref{alg:CH-evaluation-gen} to this extension field case.
Rather than iterating over all of the kernel \(x\)-coordinates,
we just iterate over a subset representing the cosets of \(\FF_\ell\),
and then compose with the norm.

Concretely, in Algorithm~\ref{alg:CH-evaluation-gen},
we should
\begin{enumerate}
    \item
        Replace \KernelRange in Line~4
        with a generator driven by an efficient
        MDAC for \(M_\ell/F_\ell\);
    \item
        Replace Line~10
        with
        \((U_i',V_i') \gets (U_i\cdot\Norm(U_i')^2,V_i\cdot\Norm(V_i')^2)\).
\end{enumerate}

First,
we can consider Algorithm~\ref{alg:basic-enumerate}:
that is, enumerating \(M_\ell/F_\ell\)
by repeated addition.
Unfortunately,
we do not have a nice bound on the length of this MDAC:
the coset representatives are not necessarily conveniently distributed
over \(M_\ell\),
so we could end up computing a lot of redundant points.

\begin{example}
    Take $(\ell,k) = (89,11)$.
    We see that \(M_\ell \not= \subgrp{\lambda,2}\).
    So we compute the minimal element in each Galois orbit (up to negation),
    and choose it to be our orbit representative.
    Using arithmetic modulo 89 only, 
    we found a choice for $R_0$ to be $R_0 = \{ 1,3,5,13 \}$.
    Now we compute an optimal MDAC, namely $(0,1,2,3,5,8,13)$.
    This chain computes the orbit generator $x$-coordinates
    using one \xDBL operation and six \xADD operations, 
    albeit involving the computation of two intermediate 
    points that will not be utilized in the final result.
\end{example}

But when we say that the coset representatives
are not conveniently distributed over \(M_\ell\),
we mean that with respect to addition.
If we look at \(M_\ell\) multiplicatively,
then the path to efficient MDACs is clearer.

The nicest case is when \(M_\ell = \subgrp{2,\lambda}\):
then,
we can take \(R_0 = \{2^i: 0 \le i < c_F\}\),
which brings us to the \(2\)-powering MDAC
of Example~\ref{example:2-powering}---except
that we stop after \(c_F-1\) \xDBL{}s.
We thus reduce the number of \xDBL{}s by a factor of \(\approx k'\),
at the expense of two norm computations.

This MDAC actually applies to more primes \(\ell\)
here than it did in~\S\ref{sec:enumerate},
because we no longer need \(2\) to generate all of \(M_\ell\);
we have \(\lambda\) to help.
In fact, the suitability of this MDAC
no longer depends on \(\ell\),
but also on \(k\).

We can go further if we assume
\begin{equation}
    \label{eq:2-3-p-condition}
    M_\ell = \subgrp{2,3,\lambda}
    \,.
    \tag{$\ast\ast$}
\end{equation}
To simplify notation, we define
\begin{align*}
    a_{\ell,k} & := [\subgrp{2,\lambda}:F_\ell]
    \,,
    &
    b_{\ell,k} & := [\subgrp{2,3,\lambda}:\subgrp{2,\lambda}]
    =
    c_F/a_{\ell,k}
    \,.
\end{align*}
Algorithm~\ref{alg:2-3-p-enumerate}
is a truncated version of Algorithm~\ref{alg:2-3-enumerate}
for computing \(S_0\) instead of \(\XSet{G}\)
when~\eqref{eq:2-3-p-condition} holds.
Algorithm~\ref{alg:2-3-p-evaluate}
is the corresponding modification of Algorithm~\ref{alg:CH-evaluation},
evaluating an \(\ell\)-isogeny over \(\FF_q\)
with kernel \(\subgrp{G}\)
at \(n\) points of \(\EC(\FF_q)\),
where \(x(G)\) is in \(\FF_{q^{k'}}\) with \(k' > 1\).

Table~\ref{tab:isogeny-cost-compare}
compares
the total cost of Algorithms~\ref{alg:2-3-p-evaluate}
and~\ref{alg:2-3-p-enumerate}
with that of Algorithms~\ref{alg:CH-evaluation} and~\ref{alg:basic-enumerate}.
In both algorithms,
we can take advantage of the fact that many of the multiplications
have one operand in the smaller field \(\FF_q\):
notably, the multiplications involving coordinates of the evaluation points.
In the context of isogeny-based cryptography (where curve constants look
like random elements of \(\FF_q\)),
this means that in Algorithm~\ref{alg:CH-evaluation},
we can replace the \(2\Mults + 2\Adds\)
in Line~\ref{alg:CH-evaluation:CrissCross}
and the \(2\Mults + 2\Squares\)
in Line~\ref{alg:CH-evaluation:square}
with \(2\ConstMults + 2\Adds\)
and \(2\ConstMults + 2\Squares\),
respectively.

\begin{table}[ht]
    \caption{Comparison of \(\ell\)-isogeny evaluation algorithms for
        kernels \(\subgrp{G}\) defined over \(\FF_q\) but with \(x(G) \in \FF_{q^{k'}}\).
        In this table, \ConstMults denotes multiplications of elements of
        \(\FF_{q^{k'}}\) by elements of \(\FF_q\) (including, but not
        limited to, curve constants).
    }
    \label{tab:isogeny-cost-compare}
    \centering
    \begin{tabular}{l|@{\;}l@{\;}|@{\;}l}
        \toprule
                    & Costello--Hisil (Algorithms~\ref{alg:CH-evaluation}
                    and~\ref{alg:basic-enumerate})
                    & This work (Algorithm~\ref{alg:2-3-p-evaluate})
        \\
        \midrule
        \Mults      & \((\ell-1)n + 2\ell - 8\) & \(2(c_F + k' - 1)n + 2c_F + 2b_{\ell,k} + 4\)
        \\
        \Squares    & \(2n + \ell - 3\)  & \(2n + 2c_F - 2\)
        \\
        \ConstMults & \((\ell+1)n + 1\)  & \(2(c_F+1)n + c_F - b_{\ell,k}\)
        \\
        \Adds       & \((n+1)(\ell+1) + 3\ell + 17\) & \(2c_F(n+1) + 4c_F + 4b_{\ell,k} - 6\)
        \\
        \Frobs      & \(0\) & \(2(k'-1)n\)
        \\
        \bottomrule
    \end{tabular}
\end{table}

\begin{algorithm}
    \caption{Compute \(S_0\) when~\eqref{eq:2-3-p-condition} holds.
    Cost: \(b_{\ell,k}-1\) \xADD{}s and \((c_F - b_{\ell,k})\) \xDBL{}s,
    or \(
        (2c_F + 2b_{\ell,k} + 4)\Mults 
        + 
        (2c_F - 2)\Squares
        +
        (c_F - b_{\ell,k})\ConstMults
        + 
        (4c_F + 4b_{\ell,k} - 6)\Adds
    \)
    }
    \label{alg:2-3-p-enumerate}
    \DontPrintSemicolon
    \KwIn{Projective \(x\)-coordinate \((X_G:Z_G)\)
        of the generator \(G\) of a cyclic subgroup of order \(\ell\) in
        \(\EC(\FF_{q^k})\),
        where \(\ell\) satisfies \(M_\ell = \subgrp{2,3,\lambda}\).
    }
    \KwOut{\(S_0\) as a list}
    \Fn{\SZeroPoints{\((X_G:Z_G)\)}}{
        \((a,b) \gets (a_{\ell,k},b_{\ell,k})\)
        \;
        \For(\tcp*[f]{Invariant: \(\XZ_{ai+j} = x([3^i2^{i(a-2) + (j-1)}]G)\)}){\(i = 0\) \textbf{to} \(b - 1\)}{
            \uIf{\(i = 0\)}{
                \(\XZ_1 \gets (X_G:Z_G)\)
                \;
            }
            \Else(\tcp*[f]{Compute new coset representative}){
                \(
                    \XZ_{ai+1}
                    \gets
                    \xADD(\XZ_{ai},\XZ_{ai-1},\XZ_{ai-1})
                \)
            }
            \For(\tcp*[f]{Exhaust coset by doubling}){\(j = 2\) \textbf{to} \(a\)}{
                \(
                    \XZ_{ai+j}
                    \gets
                    \xDBL(\XZ_{ai+j-1})
                \)
            }
        }
        \Return{\((\XZ_1,\ldots,\XZ_{c_F})\)}
    }
\end{algorithm}

\begin{algorithm}
    \caption{Isogeny evaluation using \texttt{SZeroPoints} and Frobenius.
        Cost: \(
            2(c_F + k - 1)n\Mults
            +
            2n\Squares
            +
            2(cF + 1)n\ConstMults
            +
            2c_F(n+1)\Adds
            +
            2(k-1)n\Frobs
        \)
        \emph{plus} the cost of \texttt{SZeroPoints}.
    }
    \label{alg:2-3-p-evaluate}
    \DontPrintSemicolon
    \KwIn{The \(x\)-coordinate \((X_G:Z_G)\) of a generator \(G\)
        of the kernel of an \(\ell\)-isogeny \(\phi\),
        and a list of evaluation points 
        $((U_i:V_i): 1 \le i \le n)$
    }
    \KwOut{The list of images
        \(((U_i':V_i') = \phi_x((U_i:V_i)): 1 \le i \le n)\)
    }
    \( ((X_1, Z_1),\ldots, (X_{c_F}, Z_{c_F})) \gets \)
    \SZeroPoints{\((X_G:Z_G)\)}
    \tcp*{Algorithm~\ref{alg:2-3-p-enumerate}}
    \For{\(1 \le i \le c_F\)}{
        \(
            (\hat{X_i},\hat{Z}_i)
            \gets
            (X_i + Z_i, X_i - Z_i)
        \)
        \tcp*{2\Adds} 
    }
    \For{\(i = 1\) \textbf{to} \(n\)}{
        \((\hat{U}_i,\hat{V}_i) \gets (U_i+V_i,U_i-V_i)\) 
        \tcp*{2\Adds} 
        \( (U_i',V_i') \gets (1,1) \)
        \;
        \For{\(j = 1\) \textbf{to} \(c_F\)}{
            \(
                (t_0,t_1) 
                \gets
            \)
            \CrissCross{\(\hat{X}_j\), \(\hat{Z}_j\), \(\hat{U}_i\), \(\hat{V}_i)\)}
            \tcp*{2\ConstMults + 2\Adds} % Algorithm 1 from \cite{costello-hisil}} 
            \(
                (U_i',V_i') 
                \gets
                (t_0 \cdot U_i',t_1 \cdot V_i')
            \)
            \tcp*{2\Mults} 
        }
        \((U_i',V_i') \gets (\mathtt{Norm}(U_i'), \mathtt{Norm}(V_i'))\)
        \tcp*{\(2(k'-1)\Mults + 2(k'-1)\Frobs\)}
        %% \tcp*{Algorithm~\ref{alg:norm}}
        %
        \((U_i',V_i') \gets (U_i\cdot(U_i')^2,V_i\cdot(V_i')^2)\)
        \tcp*{2\ConstMults + 2\Squares} 
    }
    \Return{\(((U'_1,V'_1),\ldots,(U'_n,V'_n))\)}
\end{algorithm}

If the parameter choice for $(\ell, k)$ does not satisfy any of these
criteria,
then we have to compute the coset representatives using some ad-hoc
MDAC.
We can do some precomputations here
to determine an optimal, or near optimal, approach to computing $R_0$.

\subsection{Experimental results}

We provide proof-of-concept implementations of our algorithms in SageMath.\footnote{Sage scripts available from \url{https://github.com/vgilchri/k-velu}.}
Our implmentations include operation-counting code
to verify the counts claimed in this article.
We provide the number of 
operations for a given $\ell$-isogeny and the extension field $k$. 
Table~\ref{tab:eval_iso} displays the costs for our algorithm,
highlighted in light gray, compared with the basic Costello--Hisil
algorithm (Algorithms~\ref{alg:CH-evaluation}
and~\ref{alg:basic-enumerate}). 
As depicted in Table~\ref{tab:eval_iso}, our approach consistently
employs fewer operations across all values of $\ell$ and extension
fields. For \(k\) of this size, it is
reasonable to use the approximation \(\Frobs \approx \Mults\)
(see~\S\ref{sec:frobenius-cost}). 

\newcommand{\greycell}{\cellcolor{gray!30}}
\begin{table}[htp]
    \caption{Cost of evaluating an \(\ell\)-isogeny at a single point
    over \(\FF_q\), using a kernel generator with \(x\)-coordinate in
    \(\FF_{q^{k'}}\).}
    \label{tab:eval_iso}
    \centering
    \begin{tabular}{r@{\ }|c|rrrrr|l}
        \toprule
        $\ell$
        & $k'$                 & \Mults  & \Squares & \ConstMults & \Adds & \Frobs & Algorithm
        \\ 
        \midrule
        \multirow{3}{*}{13}
        & any & 30 & 12 & 15 & 54 & 0 
        & Costello--Hisil (Algorithm~\ref{alg:CH-evaluation} with~\ref{alg:basic-enumerate})
        \\ 
        &   1 & \greycell 22 & \greycell 12 & \greycell 19 & \greycell 46 & \greycell 0 
        & \greycell This work (Algorithm~\ref{alg:2-3-p-evaluate})
        \\ 
        &   3 & \greycell  10 & \greycell  4 & \greycell 7 & \greycell  14 & \greycell 4
        & \greycell This work (Algorithm~\ref{alg:2-3-p-evaluate})
        \\
        \midrule
        \multirow{4}{*}{19}
        & any & 48 & 18 & 21 & 84 & 0
        & Costello--Hisil (Algorithm~\ref{alg:CH-evaluation} with~\ref{alg:basic-enumerate})
        \\
        &   1 & \greycell 34 & \greycell 18 & \greycell 28 & \greycell 70 & \greycell 0
        & \greycell This work (Algorithm~\ref{alg:2-3-p-evaluate})
        \\
        &   3 & \greycell  14 & \greycell 6 & \greycell 10 & \greycell 22 & \greycell 4
        & \greycell This work (Algorithm~\ref{alg:2-3-p-evaluate})
        \\
        &   9 &\greycell  18 & \greycell 2 & \greycell 4 & \greycell 6 & \greycell 16
        & \greycell This work (Algorithm~\ref{alg:2-3-p-evaluate})
        \\ 
        \midrule
        \multirow{3}{*}{23}  
        & any & 60 & 22 & 25 & 104 & 0 
        & Costello--Hisil (Algorithm~\ref{alg:CH-evaluation} with~\ref{alg:basic-enumerate})
        \\
        &   1 & \greycell 42 & \greycell 22 & \greycell 34 & \greycell 86 & \greycell  0
        & \greycell This work (Algorithm~\ref{alg:2-3-p-evaluate})
        \\
        &  11 & \greycell  22 & \greycell  2 & \greycell 4 & \greycell  6 & \greycell 20
        & \greycell This work (Algorithm~\ref{alg:2-3-p-evaluate})
        \\
        \bottomrule
    \end{tabular}
\end{table}

Finally,
Table~\ref{tab:nonprim-ell}
shows our success rate (over all primes \(\ell < 10^4\) )
at finding optimal MDACs for \(k \le 12\). 
These rates are computed by 
choosing the minimal representative of each Galois orbit
to be in $S_0$,
and checking whether the set $S_0$ can be computed without any intermediary additions (that would not be otherwise used).
Note, this computation checks only one approach for computing the MDAC,
hence the percentages in Table~\ref{tab:nonprim-ell} represent a lower bound
on the number of $\ell$ that have an optimal MDAC.

\begin{table}
    \caption{Percentage of primes, $3 \leq \ell < 10^4$ for which an optimal MDAC definitely exists (using the naive choice of $S_0$). }
    \label{tab:nonprim-ell}
    \centering
    \begin{tabular}{c| r| r| r| r| r| r| r| r| r| r| r | r}
        \toprule
         \textbf{k} & 1 & 2 & 3 & 4 & 5 & 6 & 7 & 8 & 9 & 10 & 11 & 12\\
         \midrule
        \textbf{\%} & 100 & 100 & 100 & 100 & 84 & 86 & 76 & 67 & 60 & 56 & 45 & 42\\
        \bottomrule
    \end{tabular}
\end{table}

\section{%%%%%%%%%%%%%%%%%%%%%%%%%%%%%%%%%%%%%%%%%%%%%%%%%%%%%%%%%%%%%%%%%%%%%%%
    Applications
}%%%%%%%%%%%%%%%%%%%%%%%%%%%%%%%%%%%%%%%%%%%%%%%%%%%%%%%%%%%%%%%%%%%%%%%%%%%%%%%
\label{sec:apps}

Our algorithms have potential applications in any isogeny-based
cryptosystem involving isogenies of prime degree \(\ell > 3\),
including key exchanges like CSIDH~\cite{csidh}
and signature schemes such as SQISign~\cite{FeoKLPW20,FeoLLW23},
SeaSign~\cite{SeaSign}, and CSI-FiSh~\cite{CSIFiSh}.
We focus on key exchange here,
but similar discussion applies for other schemes.

As mentioned in~\S\ref{sec:intro},
we also have cryptanalytic applications:
state-of-the-art attacks on group-action cryptosystems
like CSIDH involve computing a massive number of \(\ell\)-isogenies
(in order to do a Pollard-style random walk, for example,
or a baby-step giant-step algorithm as in~\cite{low-mem-csidh}).
In this context, even minor savings in individual \(\ell\)-isogenies
quickly add up to substantial overall savings.

\subsection{CSIDH and constant-time considerations}

CSIDH is a post-quantum non-interactive key exchange
based on
the action of the class group of the imaginary quadratic
order \(\ZZ[\sqrt{-p}]\)
on the set of supersingular elliptic curves \(\EC/\FF_p\)
with \(\End_{\FF_p}(\EC) \cong \ZZ[\sqrt{-p}]\).
The action is computed
via compositions of \(\ell_i\)-isogenies
for a range of small primes \((\ell_1,\ldots,\ell_m)\).

CSIDH works over prime fields \(\FF_p\),
so the methods of~\S\ref{sec:frobenius}
do not apply;
but Algorithm~\ref{alg:2-3-enumerate}
may speed up implementations
at least for the \(\ell_i\) satisfying~\eqref{eq:2-3-condition}.
(We saw in~\S\ref{sec:other-primes}
that 
67 of the 74 primes \(\ell_i\)
in the CSIDH-512 parameter set met~\eqref{eq:2-3-condition}).

The extent of any speedup depends on two factors.
The first is the number of evaluation points.
Costello and Hisil evaluate at a \(2\)-torsion point other
than \((0,0)\) in order to interpolate the image curve.
The constant-time CSIDH of~\cite{MCR19}
evaluates at one more point (from which subsequent kernels are
derived)---that is, \(n = 2\);
We find \(n = 3\) in~\cite{OAYT20},
and~\cite{CDRH-optimal}
discusses \(n > 3\).
For larger \(n\),
the cost of Algorithm~\ref{alg:CH-evaluation} overwhelms kernel enumeration,
but our results may still make a simple and interesting improvement
when \(n\) is relatively small.

The second
factor is the organisation of primes into batches for constant-time
CSIDH implementations.
CTIDH~\cite{ctidh} makes critical use of the so-called \emph{Matryoshka} property of isogeny
computations to hide the degree \(\ell\): 
using Algorithms~\ref{alg:CH-evaluation}
and~\ref{alg:basic-enumerate},
\(\ell_i\)-isogeny evaluation is a sub-computation of
\(\ell_j\)-isogeny computation whenever \(\ell_i < \ell_j\).
Organising primes into similar-sized batches,
we can add dummy operations to disguise smaller-degree isogenies
as isogenies of the largest degree in their batch.

Our Algorithm~\ref{alg:2-3-enumerate} has a limited Matryoshka property:
\(\ell_i\)-isogenies are sub-computations of \(\ell_j\)-isogenies
if \(a_{\ell_i} \le a_{\ell_k}\) 
and \(m_{\ell_i}/a_{\ell_i} \le m_{\ell_j}/a_{\ell,j}\).
For constant-time implementations,
it would make more sense to make all primes in a batch
satisfying~\eqref{eq:2-3-condition}
a sub-computation of an algorithm using the maximum \(a_\ell\) and
maximum \(m_\ell/a_\ell\) over \(\ell\) in the batch.
Redistributing batches is a delicate matter with an important impact on
efficiency;
therefore, while our work improves the running time for a fixed $\ell$, its impact on 
batched computations remains uncertain, and ultimately depends on
specific parameter choices.

\subsection{CRS key exchange}
The historical predecessors of CSIDH,
due to Couveignes~\cite{C06}
and Rostovtsev and Stolbunov~\cite{RS06,Stol09,Stol10}
are collectively known as CRS.
Here the group is the class group of a quadratic imaginary order, acting
on an isogeny (sub)class of elliptic curves with that order as their
endomorphism ring; the action is computed using a composition of
\(\ell_i\)-isogenies for a range of small primes \((\ell_1,\ldots,\ell_m)\).

In~\cite{C06,RS06,Stol09,Stol10},
isogenies are computed by finding roots of modular polynomials;
this makes key exchange extremely slow at reasonable security levels.
Performance was greatly improved in~\cite{Feo-Kief-Smi}
using Vélu-style isogeny evaluation,
but this requires
finding ordinary isogeny classes over \(\FF_p\)
with rational \(\ell_i\)-torsion points over \(\FF_{q^{k_i}}\)
with \(k_i\) as small as possible for as many \(\ell_i\) as possible.

One such isogeny class over a 512-bit prime field
is proposed in~\cite[\S4]{Feo-Kief-Smi}:
the starting curve is \(\EC/\FF_p: y^2 = x(x^2 + Ax + 1)\)
where 
\(
    p
    :=
    7\prod_\ell \ell
    - 1
\)
where the product is over all primes \(2 \le \ell <380\);
and
\[
  A =\
  \begin{subarray}{l}
    108613385046492803838599501407729470077036464083728319343246605668887327977789 \\
    32142488253565145603672591944602210571423767689240032829444439469242521864171\,.
  \end{subarray}
\]
This curve has rational \(\ell\)-isogenies 
with rational kernel generators for 
\(\ell = 3\), \(5\), \(7\), \(11\), \(13\), \(17\), \(103\), \(523\),
and \(821\),
and irrational generators over \(\FF_{q^k}\)
for \(\ell = 19\), \(29\), \(31\), \(37\), \(61\), \(71\), \(547\),
\(661\), \(881\), \(1013\), \(1181\), \(1321\), and \(1693\);
these ``irrational'' \(\ell\) are an interesting basis of comparison
for our algorithms:
all but 1321 satisfy~\eqref{eq:2-3-p-condition}.

Table~\ref{tab:crs-comparison}
compares operation counts
for Algorithms~\ref{alg:CH-evaluation} and~\ref{alg:basic-enumerate}
against Algorithm~\ref{alg:2-3-p-evaluate},
which encapsulates the improvements in~\S\ref{sec:frobenius},
for \(\ell\)-isogeny evaluation with kernel generators over
\(\FF_{q^k}\)
(and arbitrary \(n\)),
for all of the ``irrational`` \(\ell\) above \emph{except} 1321.
We see that there are substantial savings to be had for all \(n\).

\begin{table}[htp]
    \caption{Comparison of 
        Costello--Hisil (Algorithms~\ref{alg:CH-evaluation} 
        and~\ref{alg:basic-enumerate}, in white)
        with our approach (Algorithm~\ref{alg:2-3-p-evaluate}, in grey)
        for the CRS parameters with \(k > 1\) proposed in~\cite{Feo-Kief-Smi}.
        The prime \(\ell = 1321\) with \(k = 5\) is omitted,
        since in this case \(M_\ell \not= \subgrp{2,3,\lambda}\).
        In each row, \Mults, \Squares, \Adds, and \Frobs refer to
        operations on elements of \(\FF_{q^{k'}}\),
        while \ConstMults denotes multiplications of elements of
        \(\FF_{q^{k'}}\) by elements of \(\FF_q\) (including, but not
        limited to, curve constants).
    }
    \label{tab:crs-comparison}
    \centering
    \begin{tabular}{c@{\;}|@{\;}rrr@{\;}|rrrrr}
        \toprule
        \multicolumn{4}{c|}{Parameters} & \multicolumn{5}{c}{Operations}
        \\
        \(k\) & \(\ell\) & \(a_{\ell,k}\) & \(b_{\ell,k}\) & \Mults & \Squares & \ConstMults & \Adds & \Frobs
        \\
        \midrule
        \multirow{4}{*}{3} & \multirow{2}{*}{19}
        & \multirow{2}{*}{3} & \multirow{2}{*}{1}
        & 18n + 30 & 2n + 16 & 20n + 1 & 20n + 64 & 0 % CH 
        \\
        & 
        & & 
        & \greycell 10n + 4 & \greycell 2n + 4 & \greycell 8n + 2 & \greycell
8n + 14 & \greycell 4n % NEW 
        \\
        & \multirow{2}{*}{661}
        & \multirow{2}{*}{110} & \multirow{2}{*}{1}
        & 660n + 1314 & 2n + 658 & 662n + 1 & 662n + 2632 & 0 % CH 
        \\
        & 
        & & 
        & \greycell 224n + 218 & \greycell 2n + 218 & \greycell 222n + 109 & 
\greycell 222n + 656 & \greycell 4n % NEW 
        \\
        \midrule
        \multirow{4}{*}{4} & \multirow{2}{*}{1013}
        & \multirow{2}{*}{23} & \multirow{2}{*}{11}
        & 1012n + 2018 & 2n + 1010 & 1014n + 1 & 1014n + 4040 & 0 % CH 
        \\
        & 
        & & 
        & \greycell 48n + 524 & \greycell 2n + 504 & \greycell 48n + 242 & 
\greycell 48n + 1074 & \greycell 2n % NEW 
        \\
        & \multirow{2}{*}{1181}
        & \multirow{2}{*}{59} & \multirow{2}{*}{5}
        & 1180n + 2354 & 2n + 1178 & 1182n + 1 & 1182n + 4712 & 0 % CH 
        \\
        & 
        & & 
        & \greycell 120n + 596 & \greycell 2n + 588 & \greycell 120n + 290 & 
\greycell 120n + 1302 & \greycell 2n % NEW 
        \\
        \midrule
        \multirow{4}{*}{5} & \multirow{2}{*}{31}
        & \multirow{2}{*}{1} & \multirow{2}{*}{3}
        & 30n + 54 & 2n + 28 & 32n + 1 & 32n + 112 & 0 % CH 
        \\
        & 
        & & 
        & \greycell 10n + 8 & \greycell 2n + 4 & \greycell 4n & \greycell 4n
+ 14 & \greycell 8n % NEW 
        \\
        & \multirow{2}{*}{61}
        & \multirow{2}{*}{6} & \multirow{2}{*}{1}
        & 60n + 114 & 2n + 58 & 62n + 1 & 62n + 232 & 0 % CH 
        \\
        & 
        & & 
        & \greycell 20n + 10 & \greycell 2n + 10 & \greycell 14n + 5 & 
\greycell 14n + 32 & \greycell 8n % NEW 
        \\
        \midrule
        \multirow{6}{*}{7} & \multirow{2}{*}{29}
        & \multirow{2}{*}{2} & \multirow{2}{*}{1}
        & 28n + 50 & 2n + 26 & 30n + 1 & 30n + 104 & 0 % CH 
        \\
        & 
        & & 
        & \greycell 16n + 2 & \greycell 2n + 2 & \greycell 6n + 1 & \greycell
6n + 8 & \greycell 12n % NEW 
        \\
        & \multirow{2}{*}{71}
        & \multirow{2}{*}{5} & \multirow{2}{*}{1}
        & 70n + 134 & 2n + 68 & 72n + 1 & 72n + 272 & 0 % CH 
        \\
        & 
        & & 
        & \greycell 22n + 8 & \greycell 2n + 8 & \greycell 12n + 4 & 
\greycell 12n + 26 & \greycell 12n % NEW 
        \\
        & \multirow{2}{*}{547}
        & \multirow{2}{*}{39} & \multirow{2}{*}{1}
        & 546n + 1086 & 2n + 544 & 548n + 1 & 548n + 2176 & 0 % CH 
        \\
        & 
        & & 
        & \greycell 90n + 76 & \greycell 2n + 76 & \greycell 80n + 38 & 
\greycell 80n + 230 & \greycell 12n % NEW 
        \\
        \midrule
        \multirow{2}{*}{8} & \multirow{2}{*}{881}
        & \multirow{2}{*}{55} & \multirow{2}{*}{2}
        & 880n + 1754 & 2n + 878 & 882n + 1 & 882n + 3512 & 0 % CH 
        \\
        & 
        & & 
        & \greycell 116n + 220 & \greycell 2n + 218 & \greycell 112n + 108 & 
\greycell 112n + 548 & \greycell 6n % NEW 
        \\
        \midrule
        \multirow{4}{*}{9} & \multirow{2}{*}{37}
        & \multirow{2}{*}{2} & \multirow{2}{*}{1}
        & 36n + 66 & 2n + 34 & 38n + 1 & 38n + 136 & 0 % CH 
        \\
        & 
        & & 
        & \greycell 20n + 2 & \greycell 2n + 2 & \greycell 6n + 1 & \greycell
6n + 8 & \greycell 16n % NEW 
        \\
        & \multirow{2}{*}{1693}
        & \multirow{2}{*}{94} & \multirow{2}{*}{1}
        & 1692n + 3378 & 2n + 1690 & 1694n + 1 & 1694n + 6760 & 0 % CH 
        \\
        & 
        & & 
        & \greycell 204n + 186 & \greycell 2n + 186 & \greycell 190n + 93 & 
\greycell 190n + 560 & \greycell 16n % NEW 
        \\
        \bottomrule
    \end{tabular}
\end{table}

\bibliographystyle{plain}
\ifshort
\bibliography{referencesshort}
\else
\bibliography{references}
\fi
\ifsubmission
\else
\appendix
\section{%%%%%%%%%%%%%%%%%%%%%%%%%%%%%%%%%%%%%%%%%%%%%%%%%%%%%%%%%%%%%%%%%%%%%%%
    Computing a kernel generator
}%%%%%%%%%%%%%%%%%%%%%%%%%%%%%%%%%%%%%%%%%%%%%%%%%%%%%%%%%%%%%%%%%%%%%%%%%%%%%%%
\label{sec:compkernel}
One task that poses a challenge is to find a point $G \in \EC(\FF_{q^k})$. In this section, 
we will illustrate an efficient method for computing a point with the necessary properties for use in the isogeny evaluation.
\subsection{The subgroup \(H_k\).}

To compute a rational isogeny, 
our first step will be to sample a random point $P\in \EC(\FF_{q^k})$ of order $\ell$.  
For this, 
letting $N_k := \#\EC(\FF_{q^k})$, one could sample a random point 
$P$, and compute $P_\ell = [N_k/\ell]P$. 
Then $P_\ell$ is either 0 or a point of order $\ell$. 
If the order of $P_\ell$ is not $\ell$, one tries again with a new choice of $P$.

\begin{remark}\label{remark:findp}
In the special case that $\ell^2$ divides $\#\EC(\FF_{q^k})$, we instead choose 
$N_k = exp(\FF_{q^k})$, the exponent of the group order.
We do this to avoid having a cofactor, $N_k / \ell$,
that ``kills" certain torsion points. 
\end{remark}

In our context, we are assured that the $P_\ell$ we are looking for 
is \textit{not} in $\EC(\FF_q)$, 
or indeed in any \(\EC(\FF_{q^i})\), for any proper divisor \(i\) of \(k\).
We can therefore save some effort
by sampling \(P_\ell\) from the genuinely ``new'' subgroup of \(\EC(\FF_{q^k})\).

Recall that \(\EC(\FF_{q^i}) = \ker(\pi^i - [1])\) for each \(i > 0\).
For each \(k > 0\),
then
we define an endomorphism
\[
    \eta_k := \Phi_k(\pi)
    \in \End(\EC)
\]
where
\(\Phi_k(X)\) is the \(k\)-th cyclotomic polynomial
(that is, the minimal polynomial over \(\ZZ\)
of the primitive \(k\)-th roots of unity in \(\QQbar\)).
The subgroup
\[
    H_k := \ker(\eta_k) \subset \EC(\FF_{q^k})
\]
satisfies
\[
    \EC(\FF_{q^k})
    =
    H_k \oplus \sum_{i\mid k, i \not= k} \EC(\FF_{q^i})
    .
\]

The key fact is that in our situation, \(\EC[\ell](\FF_{q^k}) \subset H_k\).

\paragraph{Generating elements of \(\EC[\ell](\FF_{q^k})\).}

We always have \(\Phi_k(X) \mid X^k - 1\),
so for each \(k > 0\) 
there is an endomorphism
\[
    \delta_k := (\pi^k-[1])/\eta_k
    \in \End(\EC)
    \,,
\]
and \(\delta_k(\EC(\FF_{q^k})) \subset H_k\).
We can therefore sample a point
\(P_\ell\) in \(\EC[\ell](\FF_{q^k})\)
by computing
\[
    P_\ell = [h_k/\ell]\delta_k(P)
    \qquad
    \text{where}
    \qquad
    h_k := \#H_k
\]
and \(P\) is randomly sampled from \(\EC(\FF_{q^k})\).

\cref{tab:k-h_k-delta_k} lists 
the first few values of \(h_k\) and \(\delta_k\).
We see that evaluating \(\delta_k\)
amounts to a few Frobenius operations
(which are almost free, depending on the field representation)
and a few applications of the group law,
so this approach
saves us a factor of at least \(1/k\) in the loop length of
the scalar multiplication
(compared wih computing \(P_\ell\) as \([N_k/\ell]P\)),
but for highly composite \(k\) we save much more.

The value \(\varphi(k)\) of the Euler totient function
plays an important role.
We have \(h_k = q^{\varphi(k)} + o(q^{\varphi(k)})\),
so computing \([h_k/\ell]\) instead of \([N_k/\ell]\) 
allows us to reduce the loop length 
of basic scalar multiplication algorithms
from \(k\log_2q\) to \(\varphi(k)\log_2q\),
which is particularly advantageous when \(k\) is highly composite.

\paragraph{The action of Frobenius on \(H_k\).}

The Frobenius endomorphism \(\pi\) commutes with \(\eta_k\),
and therefore restricts to an endomorphism of \(H_k\).
If \(G \subset H_k\) is a subgroup
of prime order \(\ell\) and fixed by \(\pi\),
then \(\pi\) will act on \(G\)
as multiplication by an integer eigenvalue \(\lambda\)
(defined modulo \(\ell\)).
Since \(\eta_k = \Phi_k(\pi) = 0\) on \(H_k\) by definition,
we know that \(\lambda\) is a \(k\)-th root of unity
in \(\ZZ/\ell\ZZ\).
\paragraph{Scalar multiplication with Frobenius.}
Now, \(H_k \cong \ZZ/d_k\ZZ\times\ZZ/e_k\ZZ\),
where \(d_k\mid e_k\)
and (by the rationality of the Weil pairing)
\(d_k\mid q^k-1\).
Typically, \(d_k\) is very small compared with \(e_k\).
If \(\ell \nmid d_k\),
then we can replace \(H_k\)
with the cyclic subgroup \(H_k' := [d_k]H_k\),
and \(h_k\) with \(h_k' := e_k/d_k\).
Now, \(\pi\) induces an endomorphism of \(H_k'\),
and therefore acts as multiplication by an eigenvalue \(\lambda\)
defined modulo \(h_k'\).

We want to compute \([c_k]P\) for \(P\) in \(H_k'\),
where \(c_k := h_k'/\ell\).
Since \(\Phi_k(\pi) = 0\),
the eigenvalue \(\lambda\)
is a root of \(\Phi_k\) (i.e., a primitive \(k\)-th root of unity)
modulo \(h_k'\).
We can compute \(a_0,\ldots,a_{k-1}\)
such that
\[
    c_k \equiv \sum_{i = 0}^{k-1} a_i\lambda^i \pmod{h_k'}
    \,,
\]
with each coefficient \(a_i\approx (h_k')^{1/\varphi(k)}\) in \(O(q)\),
and then
\[
    [c_k]P = \sum_{i = 0}^{k-1} [a_i]\pi^i(P)
    \,.
\]
%\unsure{There was a HL here}
%\hl{
If we precompute the various sums of conjugates of \(P\),%}
then we can compute \([c_k]\) using a multiscalar multiplication
algorithm with a loop of length only \(\log_2q\).
This might be particularly interesting in the cases 
where \(\varphi(k) = 2\) (which corresponds to GLV multiplication) or \(4\).

\begin{table}[ht]
   \caption{The first few values of \(h_k\) and \(\delta_k\).
    }
    \label{tab:k-h_k-delta_k}
    \centering
    \begin{tabular}{r|l|l|l}
        \(k\) & \(h_k\) & \(\delta_k\) & \(\varphi(k)\)
        \\
        \hline
        \(1\)
            & \(1\)
            & \([1]\)
        \\
        \(2\)
            & \(N_2/N_1 = q + O(\sqrt{q})\)
            & \(\pi - [1]\)
            & \(1\)
        \\
        \(3\)
            & \(N_3/N_1 = q^2 + O(q^{3/2})\)
            & \(\pi - [1]\)
            & \(2\)
        \\
        \(4\)
            & \(N_4/N_2 = q^2 + O(q)\)
            & \(\pi^2 - [1]\)
            & \(2\)
        \\
        \(5\)
            & \(N_5/N_1 = q^4 + O(q^{7/2})\)
            & \(\pi - [1]\)
            & \(4\)
        \\
        \(6\)
            & \((N_6N_1)/(N_2N_3) = q^2 + O(q^{3/2})\)
            & \((\pi+[1])(\pi^3-[1])\)
            & \(2\)
        \\
        \(7\)
            & \(N_7/N_1 = q^6 + O(q^{7/2})\)
            & \(\pi - [1]\)
            & \(6\)
        \\
        \(8\)
            & \(N_8/N_4 = q^4 + O(q^2)\)
            & \(\pi^4 - [1]\)
            & \(4\)
        \\
        \(9\)
            & \(N_9/N_3 = q^6 + O(q^{4/2})\)
            & \(\pi^3 - [1]\)
            & \(6\)
        \\
        \(10\)
            & \((N_{10}N_1)/(N_2N_5) = q^4 + O(q^{7/2})\)
            & \((\pi+[1])(\pi^5 - [1])\)
            & \(4\)
        \\
        \(11\)
            & \(N_{11}/N_1 = q^{10} + O(q^{19/2})\)
            & \(\pi - [1]\)
            & \(10\)
        \\
        \(12\)
            & \((N_{12}N_2)/(N_4N_6) = q^4 + O(q^3)\)
            & \((\pi^2+[1])(\pi^6 - [1])\)
            & \(4\)
        \\
        \hline
    \end{tabular}
 
\end{table}

\begin{example}
    Consider $k=3$: we have 
    $\EC(\FF_{q^3}) \cong \EC(\mathbb{F}_{q}) \oplus H_3$,
    and $\#H_3 = N_3 / N_1$. 

    We first note that $\pi^3$ fixes the points in
    $\EC(\FF_{q^3})$, 
    so $\pi^3 - [1] = [0]$ on $\EC(\FF_{q^3})$. 
    By similar logic, the regular Frobenius map will fix the points in
    $\EC(\FF_{q})$, 
    meaning $\pi - [1] = [0]$ holds only for points contained entirely in the 
    $\EC(\FF_{q})$ portion of $\EC(\FF_{q^3})$. 
    Therefore, by computing $P_H = (\pi - 1)P$, we are ``killing" the 
    $\EC(\FF_{q})$ part of $P$, leaving only the part lying in the subgroup $H_3$.
    This computation is easy enough to do, 
    and so now we need only compute $P_\ell = [N_3/N_1/\ell]P_H$, 
    thereby saving us about a third of the multiplications. 
\end{example}

\paragraph{The ``twist trick''.}

When \(k\) is even,
if we use \(x\)-only scalar multiplication,
then the following lemma allows us 
to work over \(\FF_{q^{k/2}}\) instead of \(\FF_{q^{k}}\).
In the case \(k = 2\),
this is known as the ``twist trick''.

\begin{lemma}
    \label{lemma:twist-trick}
    If \(k\) is even,
    then every point \(P\) in \(H_k\)
    has \(x(P)\) in \(\FF_{q^{k/2}}\).
\end{lemma}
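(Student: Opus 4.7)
The plan is to show that Frobenius raised to the $k/2$ power acts as $-[1]$ on $H_k$, which forces each $x$-coordinate to be fixed by $q^{k/2}$-th powering and hence to lie in $\FF_{q^{k/2}}$.

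The first step is a purely arithmetic statement about cyclotomic polynomials: for even $k$, we have $\Phi_k(X) \mid X^{k/2}+1$. To see this, note that any root $\zeta$ of $\Phi_k$ is a primitive $k$-th root of unity, so $\zeta^{k/2}$ is a square root of $1$; but $\zeta^{k/2}\neq 1$ because $\zeta$ has order exactly $k$, hence $\zeta^{k/2}=-1$. Every root of $\Phi_k$ is therefore a root of $X^{k/2}+1$, and since $\Phi_k$ is separable, the divisibility follows.

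Next, I translate this into a statement about endomorphisms. By definition $\eta_k = \Phi_k(\pi)$ vanishes on $H_k$, and by Step~1 we can write $X^{k/2}+1 = \Phi_k(X)\cdot Q(X)$ in $\ZZ[X]$. Evaluating at $\pi$ gives $\pi^{k/2}+[1] = \eta_k\circ Q(\pi)$, which acts as $[0]$ on $H_k$. Hence $\pi^{k/2}(P) = -P$ for every $P\in H_k$.

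The conclusion is then immediate. Since $x(-P) = x(P)$ on any Weierstrass (in particular Montgomery) model, and since $\pi$ acts on coordinates as $q$-th powering, we have
\[
    x(P)^{q^{k/2}} \;=\; x(\pi^{k/2}(P)) \;=\; x(-P) \;=\; x(P),
\]
so $x(P)$ is fixed by the generator of $\operatorname{Gal}(\FF_{q^k}/\FF_{q^{k/2}})$ and therefore lies in $\FF_{q^{k/2}}$. No real obstacle arises here; the only point requiring any care is the cyclotomic divisibility in Step~1, and this is standard.
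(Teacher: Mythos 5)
Your proof is correct and follows the same route as the paper's: both establish that $\pi^{k/2}$ acts as $-[1]$ on $H_k=\ker(\eta_k)$ because $\Phi_k(X)$ divides $X^{k/2}+1$ for even $k$, and then conclude from $x(-P)=x(P)$ that $x(P)$ is fixed by the $q^{k/2}$-power Frobenius. The only difference is that you spell out the cyclotomic divisibility and the Galois-fixed-field step, which the paper leaves implicit.
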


\begin{proof}
    If \(k\) is even,
    then \(\eta_k\) divides \(\pi^{k/2}+1\),
    so \(\pi^{k/2}\) acts as \(-1\) on \(H_k = \ker(\eta_k)\):
    that is, if \(P\) is in \(H_k\),
    then \(\pi^{k/2}(P) = -P\),
    so \(x(P)\) is in \(\FF_{q^{k/2}}\).
\end{proof}

\begin{example}
    Consider \(k = 6\).
    We take a random point \(R\) in \(E(\FF_{q^6})\),
    and compute \(R' := \pi^3(R) - R\),
    then \(P := \pi(R') + R'\);
    now \(P = \delta_6(R)\) is in \(H_6\),
    and \(x(P)\) is in \(\FF_{q^3}\).
    We have 
    \(h_6 = N_1^2 - (q + 1)N_1 + q^2 - q + 1 \approx q^2\),
    and we need to compute
    \(x([c_{\ell,6}]P)\) where \(c_{\ell,6} := h_6/\ell\).
    Since \(x(P)\) is in \(\FF_{q^3}\),
    we can do this using \(x\)-only arithmetic
    and the Montgomery ladder
    working entirely over \(\FF_{q^3}\).
\end{example}

The improvements outlined in this section
are summarized in 
Algorithm~\ref{alg:computeKernel}. 
    \begin{algorithm}[htp]
    \DontPrintSemicolon
    \KwIn{$\EC$ an elliptic curve defined over $\FF_{q}$, $\ell$ an integer, $k$ such that $\FF_{q^k}$ contains an $\ell$-torsion point.}
    \KwOut{$P$, a point on $\EC(\FF_{q^k})$ of order $\ell$.}
    $P_\ell \gets (0:1:0)$\;
    \Repeat{$P_\ell = (0:1:0)$}{
      $P \gets \texttt{RandomPoint}(\EC(\FF_{q^k}))$\;
      $c \gets h_k/\ell $\tcp*{$h_k$ taken from Table~\ref{tab:k-h_k-delta_k}.} 
      $P' \gets \delta_k(P)$ \tcp*{$\delta_k$ taken from Table~\ref{tab:k-h_k-delta_k}.}
      $P_\ell \gets [c]P'$
    }
     \Return $P_\ell$\;
    \caption{Computation of Kernel Generator.}
    \label{alg:computeKernel}
    \end{algorithm} 

\fi
\end{document}